\documentclass{ieeetj}
\usepackage{cite}
\usepackage{amsmath,amssymb,amsfonts}
\usepackage{algorithmic}
\usepackage{graphicx,color}
\usepackage{textcomp}
\usepackage{xcolor}
\usepackage{hyperref}
\hypersetup{hidelinks=true}
\usepackage{algorithm,algorithmic}
\def\BibTeX{{\rm B\kern-.05em{\sc i\kern-.025em b}\kern-.08em
    T\kern-.1667em\lower.7ex\hbox{E}\kern-.125emX}}
\AtBeginDocument{\definecolor{tmlcncolor}{cmyk}{0.93,0.59,0.15,0.02}\definecolor{NavyBlue}{RGB}{0,86,125}}

\usepackage{algorithm}
\usepackage{graphicx}
\usepackage{mathtools}
\usepackage{subfigure}
\usepackage{bm}
\usepackage{bbm}
\usepackage{tabularx}
\usepackage{amsthm}
\usepackage{subcaption}

\renewcommand{\thetable}{\Roman{table}} 
\makeatletter
\renewcommand{\p@subtable}{}                       
\makeatother

\makeatletter
\renewcommand{\p@subfigure}{}                       
\makeatother

\theoremstyle{plain}
\newtheorem{proposition}{Proposition}
\newtheorem{corollary}{Corollary}
\theoremstyle{definition}
\newtheorem{definition}{Definition}

\newtheorem{assump}{Assumption}

\newcommand{\argmin}{\mathop{\mathrm{argmin}}}
\newcommand{\argmax}{\mathop{\mathrm{argmax}}}

\def\OJlogo{\vspace{-4pt}$<$Society logo(s) and publication title will appear here.$>$}
\def\seclogo{\vspace{10pt}$<$Society logo(s) and publication title will appear here.$>$}

\newcommand{\ev}{\mathbf{e}}

\newcommand{\uv}{\mathbf{u}}

\newcommand{\xv}{\mathbf{x}}

\newcommand{\Dm}{\mathbf{D}}

\newcommand{\Wm}{\mathbf{W}}

\newcommand{\Xm}{\mathbf{X}}

\begin{document}
\receiveddate{XX Month, XXXX}
\reviseddate{XX Month, XXXX}
\accepteddate{XX Month, XXXX}
\publisheddate{XX Month, XXXX}
\currentdate{XX Month, XXXX}
\doiinfo{XXXX.2022.1234567}

\markboth{}{Author {et al.}}

\title{Denoising for Neuromorphic Cameras\\
Based on Graph Spectral Features }

\author{\IEEEauthorblockN{Shimpei Harada, Junya Hara,~\IEEEmembership{Member, IEEE}, Hiroshi Higashi~\IEEEmembership{Member, IEEE}, and Yuichi Tanaka,~\IEEEmembership{Senior Member, IEEE}}}
\affil{S. Harada, J. Hara, and Y. Tanaka are with the Graduate School of Engineering, The University of Osaka, Osaka, Japan.}
\affil{H. Higashi is with the Graduate School of Science and Engineering, Kansai University, Osaka, Japan.}
\corresp{Corresponding author: First A. Author (email: s.harada@msp-lab.org).}
\authornote{The preliminary version of this paper is presented in \cite{harada2024denoising}.\\
This work was supported in part by JSPS KAKENHI under Grant 23H01415, 22H05163, and 22K12500, JST AdCORP under Grant JPMJKB2307.}

\begin{abstract}
Neuromorphic cameras, also known as event-based cameras, can detect changes in the environmental brightness asynchronously and independently for each pixel.
They output the brightness changes, i.e., events, as 3-D (2-D pixel coordinates + time) streaming data. 
While event-based cameras are used in many applications because of their desirable characteristics, e.g., high temporal resolution, low latency, low power consumption, and high dynamic range, their measurements contain considerable noise due to their high sensitivity.
In this paper, we propose a denoising method for event-based cameras based on graph spectral features. 
In the proposed method, we first construct a graph where nodes represent events and edges represent the spatiotemporal distance between the events.
To calculate the graph-specified parameter that controls the connectivities of a constructed graph,
we utilize the prior on the density of 3-D events.
We then calculate the eigenvectors of the graph Laplacian.
The obtained eigenvectors are used to extract noiseless events directly.
In the calculation of the eigenvectors, we customize the graph Laplacian to reorder its eigenvalues. This allows us to leverage fast eigensolver algorithms instead of the naive eigendecomposition and thereby reduce computational complexity.
In experiments on synthetic and real-world event data, we demonstrate that the proposed method effectively removes noise events from the raw events compared to alternative methods.
\end{abstract}

\begin{IEEEkeywords}
Event-based camera, Fiedler vector, $\epsilon$-neighbor graph
\end{IEEEkeywords}


\maketitle
\section{Introduction}
\IEEEPARstart{N}{euromorphic} cameras, also known as event-based cameras, are widely utilized in many imaging applications \cite{gallegoEventbasedVisionSurvey2020,leeReviewBioinspiredVision2015}, such as video frame interpolation \cite{tulyakovTimeLensEventbased2022,wuVideoInterpolationEventDriven2022}, object recognition \cite{schaeferAegnnAsynchronousEventbased2022,oconnorRealtimeClassificationSensor2013}, and simultaneous localization and mapping (SLAM) \cite{gallegoEventbased6DOFCamera2017,weikersdorferSimultaneousLocalizationMapping2013}.
They capture a scene from environmental brightness changes
by dynamic vision sensors (DVSs).

DVSs have several desirable characteristics \cite{delbruckActivitydrivenEventbasedVision2010,lichtsteiner128Backslashtimes1282008}: high temporal resolution, low power consumption, low latency, and high dynamic range.
In contrast, conventional frame-based cameras could suffer from capturing fast-moving objects and wide dynamic range scenes.

DVSs detect the environmental brightness changes asynchronously and independently for each pixel.
They output the changes as 3-D (2-D pixel coordinates + time) streaming data called \textit{events} \cite{gallegoEventbasedVisionSurvey2020}. Events are distributed irregularly in the 3-D space. Fig.~\ref{fig:example} shows an example of event stream data. 

\begin{figure}[t]
\centering
\includegraphics[width = 0.7\linewidth]{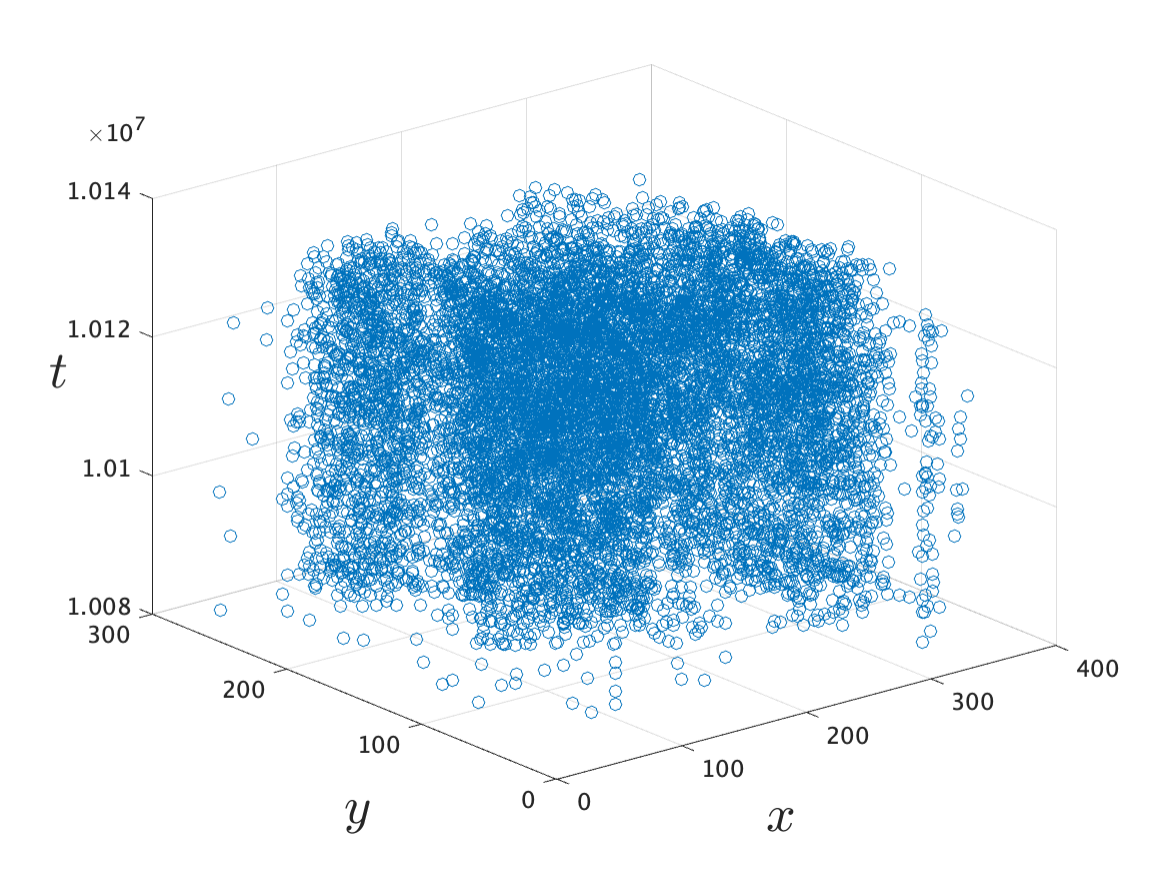} 
\caption{Example of signals captured by an event-based camera. $x$ and $y$ are the axes of the spatial coordinates. $t$ is the axis of the time coordinate.}
\label{fig:example}
\end{figure}

The acquisition process of DVSs usually results in considerable noise caused by overheating pixel (i.e., hot pixel) and background activity (BA) noise due to a trade-off with their high sensitivity \cite{gallegoEventbasedVisionSurvey2020}.
Consequently, numerous noise events are contained in the observed events.
Denoising of event streams, which classifies real and noise events, is crucial for applications using neuromorphic cameras.

For denoising of events, spatial low-pass filters are often applied to a set of events within a short time window
\cite{xieImprovedApproachVisualizing2017,xieDVSImageNoise2018}.
These methods often ignore the temporal correlation of events.
To resolve the problem, methods for extracting events having dense neighbors in the 3-D space have been proposed \cite{fengEventDensityBased2020a,zhangNeuromorphicImagingDensitybased2023}.
However, these methods often mistakenly detect noise events that accidentally have dense neighbors because they perform event denoising based on their \textit{individual} densities.  

In this paper, we propose an effective denoising method for event-based cameras based on graph spectral features.
In contrast to existing methods, we consider the \textit{inter-cluster} densities formed by real events. The key idea is that we identify the clusters of real events using a graph connecting nearby events in the 3-D space, where real events typically form large clusters in the 3-D space while noise events scatter in fragmented pieces. This characteristic is referred to as \textit{density prior}.

Our proposed method is performed in two steps. First, we construct an $\epsilon$-neighbor graph ($\epsilon$NG) from the raw events, where events (nodes) are connected to their neighbors within the distance $\epsilon$ by edges.
Second, we detect large connected components in the $\epsilon$NG based on the graph spectral features.
Specifically, we analyze the eigenvectors of the graph Laplacian associated with the small eigenvalues, where the magnitudes of these eigenvectors indicate the strength of connectivity.

The proposed event denoising, i.e., a large component detection of an event-based graph, is based on spectral graph theory \cite{chungSpectralGraphTheory1997}.
We show that, if real events form large clusters, the largest components correspond to non-zero entries in eigenvectors for the smallest eigenvalues.

To efficiently obtain eigenvectors, we utilize a matrix derived from the graph Laplacian.
Specifically, it can be obtained by reordering the non-zero eigenvalues in the graph Laplacian such that
the \textit{largest} eigenvalues of the matrix correspond to the \textit{smallest non-zero} eigenvalues of the graph Laplacian. This customized graph operator allows us to utilize fast eigensolver algorithms for calculating eigenvectors for small eigenvalues.

We perform denoising experiments with synthetic and real-world datasets.
The proposed method outperforms alternative model- and neural network-based methods in denoising accuracy.

The preliminary work of this paper is presented in \cite{harada2024denoising}.
It focuses on the scenario where real events form a single large cluster.
In this paper, we generalize its theoretical features by extending it to handle scenarios where real events form multiple clusters.
We also examine computational complexity and perform comprehensive experiments.

The outline of this paper is as follows:
Section~\ref{sec:related_work} starts by describing the event acquisition model of neuromorphic cameras. Then, existing studies on event denoising are reviewed. In Section ~\ref{sec:preliminary}, the basics of graph spectral theory and a graph construction method are introduced. We propose the event denoising method in Section~\ref{sec:proposed}. 
We derive the computational complexity of the proposed method to show that its computational effectiveness in Section~\ref{sec:complexity}.
In Section~\ref{sec:experiment}, we evaluate the effectiveness of the proposed method via conducting event denoising experiments.  
Section~\ref{sec:conclusion} concludes this paper.

\textit{Notation:} 
A vector and a matrix are denoted by boldface lower and upper cases, respectively.
The $i$th element of vector $\mathbf{v}$ is denoted by $[\mathbf{v}]_i$.
The $(n,m)$-element of a matrix $\mathbf{A}$ is denoted by $[\mathbf{A}]_{n,m}$.
The $m$th column and $n$th row of $\mathbf{A}$ are denoted by $[\mathbf{A}]_{:,m}$ and $[\mathbf{A}]_{n,:}$, respectively. We denote a $\ell_2$ norm by $\|\cdot\|$. 
The cardinality of a set $\mathcal{A}$ is denoted by 
$|\mathcal{A}|$.
The element-wise product is denoted by $\circ$.

\section{Related Works}\label{sec:related_work}
In this section, we first describe the event acquisition model of neuromorphic cameras. Then, we briefly review existing studies on denoising for event-based cameras.

\subsection{Neuromorphic Camera}
A neuromorphic camera detects environmental brightness changes as events by DVSs \cite{lichtsteiner128Backslashtimes1282008}.
Every DVS, which is associated with 2-D pixel coordinate, outputs events asynchronously and independently of the pixels \cite{gallegoEventbasedVisionSurvey2020}. Let $I(x,y,t)$ be the brightness at the spatial location $(x,y)$ at time instance $t$.
The brightness change is measured by the temporal log-difference as follows:
\begin{equation}
\Delta\ln I(x,y,t)=\ln I(x,y,t)-\ln I(x,y,t-\Delta t),
\end{equation}
where $\Delta t$ is a small perturbation to $t$. The event is triggered if $|\Delta\ln I(x,y,t)| > C$ where $C$ is a controllable contrast threshold. Formally, a set of events is defined as follows\footnote{An event may have another entry $p\in \{-1,1\}$ called polarity \cite{HighSpeedHigh}, where $p$ is determined such that $\Delta\ln I(x,y,t) = pC$. For conciseness, we omit the polarity.}: 
\begin{equation}
\begin{split}
    &\mathcal{X}_{\text{E}}\coloneqq \{\ev=[x,y,t]^\top\mid|\Delta\ln I(x,y,t)|>C,(x,y,t)\in\mathcal{X}\},
\end{split}
\end{equation}
where $\mathcal{X}$ is the feasible set of $(x,y,t)$.
Let $\ev_k$ be the $k$th detected event.
Then, denoting the collection of $\ev_k\in\mathcal{X}_{\text{E}}$ by $\mathbf{E}=[\ev_1,\ldots,\ev_N]$, we can regard $\mathbf{E}$ as 3-D event streaming data.

\subsection{Existing Event Denoising Methods}\label{sub:exist_event_denoise}
DVSs usually yield two different types of noise, hot pixel noise and BA noise.
Hot pixel noise continuously appears along the temporal direction.
It is caused by the overheating of sensors due to prolonged exposure of pixels similar to traditional image sensors.
BA noise occurs from dark and leakage currents in a DVS circuit, which is distributed randomly in space.
Particularly, a dark lightning environment or small threshold $C$ induces considerable BA noise.

Three approaches have mainly been proposed to remove noise events from a raw event stream.

\noindent\textbf{Spatial low-pass filter:}\label{method_lowp}
The classical but still widely-used approach is spatial low-pass filtering.
In this approach, a raw event stream within a short time window is aggregated to obtain a 2-D image.
A low-pass spatial filter is applied to a set of 2-D images, and smoothed images corresponding to the denoised events are obtained \cite{xieImprovedApproachVisualizing2017, xieDVSImageNoise2018}.
However, it often ignores the temporal behavior of the events due to aggregation.
\medskip

\noindent\textbf{Density-based analysis:}\label{method_Den}
The second method is also a filtering-based method that uses the cumulative density of events in spatiotemporal space \cite{delbruckFramefreeDynamicDigital2008,fengEventDensityBased2020a,zhangNeuromorphicImagingDensitybased2023}.
The density is calculated by counting the number of neighboring events for each event within a small radius sphere. 
However, they typically remove the real events located at the border of the object or mistakenly detect hot pixel noise as real events.
\medskip

\noindent\textbf{Deep neural networks:}\label{method_deep}
As in image processing methods for frame-based cameras, deep neural networks are also used for event denoising.
For example, an event denoising method using a graph neural network (GNN) is proposed in \cite{alkendiNeuromorphicCameraDenoising2022}. 
However, it typically requires a large amount of annotated training data: This is not often the case for event-based cameras.

The above-mentioned existing methods typically focus on linear or nonlinear filtering in the local 3-D space.
However, analyzing spatiotemporal correlation with the local window results in a limited performance when noise events are spatiotemporally correlated with each other.
To address this issue, in the following section, we introduce a \textit{graph} to analyze the local and global correlations of events simultaneously.

\section{Spectral Graph Theory and Graph Construction Method}\label{sec:preliminary}
In this section, we first introduce the basics of spectral graph theory and the definition of the Fiedler vector \cite{ortegaGraphSignalProcessing2018,fiedlerAlgebraicConnectivityGraphs1973}.
We then review the $\epsilon$-neighbor graph construction method.

\subsection{Basics of Spectral Graph Theory}\label{sub:graph_basic}
Let $\mathcal{G}=(\mathcal{V},\mathcal{E},\mathbf{W})$ be a weighted undirected graph, where $\mathcal{V}$ and $\mathcal{E}$ are the sets of nodes and edges, and $\mathbf{W}$ is a weighted adjacency matrix. The $(i,j)$-element of $\mathbf{W}$ is positive if $(i,j)$ is connected, i.e., $[\mathbf{W}]_{i,j}>0$ if $(i,j)\in\mathcal{E}$, and $[\mathbf{W}]_{i,j}=0$ otherwise. The number of nodes is denoted as $|\mathcal{V}|=N$. The degree matrix is defined as $[\Dm]_{n,n} = \sum_{m}[\Wm]_{n,m}$. Then, a combinatorial graph Laplacian matrix is defined as $\mathbf{L}=\mathbf{D}-\mathbf{W}$, and symmetric normalized graph Laplacian matrix is defined as $\underline{\mathbf{L}}=\mathbf{D}^{-1/2}\mathbf{L}\mathbf{D}^{-1/2}$. 
We refer to $\mathbf{L}$ and $\underline{\mathbf{L}}$ as graph variation operators. 

Since $\mathbf{L}$ (and also $\underline{\mathbf{L}}$) is symmetric and positive semidefinite, they are diagonalizable: The eigendecomposition of $\mathbf{L}$ is given by $\mathbf{L}=\mathbf{U}\mathbf{\Lambda}\mathbf{U}^\top$, where $\mathbf{U}=[\mathbf{u}_{0},\ldots,\mathbf{u}_{N-1}]$ is a collection of eigenvectors and $\mathbf{\Lambda}=\text{diag}(\lambda_0,\ldots,\lambda_{N-1})$ is a diagonal eigenvalue matrix. Without loss of generality, we can assume that these eigenvalues are ordered as $0=\lambda_0\leq\lambda_{1}\leq\cdots\leq\lambda_{N-1}=\lambda_{\max}$.
A graph signal is defined as a mapping from a node to a real number, i.e., $x:\mathcal{V}\to\mathbb{R}$.
Simply speaking, the graph signal is represented by a vector $\mathbf{x} \in \mathbb{R}^N$ and $x_i$ is viewed as the signal value on the $i$th node.


The Fiedler vector of a graph is formally defined as follows:
\begin{definition}[Fiedler vector]\label{def:gfv}
Let $\mathbf{L}$ be the graph Laplacian matrix of an undirected connected graph $\mathcal{G}$. Then, the Fiedler vector is given by the eigenvector associated with the second smallest eigenvalue of $\mathbf{L}$ \cite{fiedlerAlgebraicConnectivityGraphs1973}, i.e., 
\begin{equation}\label{eq:bin_clustering_relax}
    \mathbf{u}_{\text{FV}}\coloneqq \argmin_{\mathbf{x},\|\mathbf{x}\|=1}\mathbf{x}^\top\mathbf{L}\mathbf{x}\quad \text{s.t. }\mathbf{x}^\top\mathbf{L}\mathbf{x}> 0.
\end{equation}
The second smallest eigenvalue is called \textit{algebraic connectivity} \cite{fiedlerAlgebraicConnectivityGraphs1973}.
\end{definition}


\subsection{$\epsilon$-Neighbor Graph Construction}\label{sub:graph_const}
In some applications, the graph may not be given a priori, such as point clouds and sensor networks.
Event streams are also the case.
Therefore, we need to construct an appropriate graph from the event stream based on some requirements (e.g., the metric of distance and degree of nodes).
In this paper, we use the $\epsilon$-neighbor graph construction.

$\epsilon$NG is one of the popular graph construction methods.
It connects nodes to their neighbors within the radius $\epsilon$.
Suppose that we have a collection of features $\Xm = [\xv_1, \dots, \xv_M]\in\mathbb{R}^{N\times M}$, where $N$ is the dimensionality of each feature and $M$ is the number of the features.
Then, an off-diagonal element in a binary adjacency matrix of the $\epsilon$NG is expressed as follows:
\begin{equation}
    [\mathbf{A}_{\epsilon\text{NG}}]_{i,j\neq i}=
    \begin{cases}
        1 & \text{if }\|[\mathbf{X}]_{i,:}-[\mathbf{X}]_{j,:}\|\leq \epsilon\\
        0 & \text{otherwise}.
    \end{cases}\label{eq:epsilon-graph}
\end{equation}
The value of $\epsilon$ controls the connectivity of the $\epsilon$NG.



\section{Event Denoising Using Graph Spectral Features}\label{sec:proposed}
In this section, we propose a denoising method for event-based cameras using graph spectral features. The overview of the proposed method is illustrated in Fig.~\ref{fig:over_flow}.

The proposed method mainly consists of three steps.


\begin{enumerate} [resume,listparindent=1em]
\item \emph{Graph construction}: 
An $\epsilon$NG is constructed from observed noisy events. We denote the corresponding graph Laplacian as $\mathbf{L}_{\epsilon\text{NG}}$.
\item \emph{Calculation of the eigenvectors}: 
The eigenvectors of $\mathbf{L}_{\epsilon\text{NG}}$ is calculated. 
\item \emph{Large components detection}: 
We obtain denoised events by extracting indices of nonzero elements in eigenvectors for small eigenvalues as large connected component detection in the $\epsilon$NG. 
\end{enumerate}


\noindent In the following, we describe each building block in detail.

\begin{figure}[t]
\centering
\includegraphics[width = \linewidth]{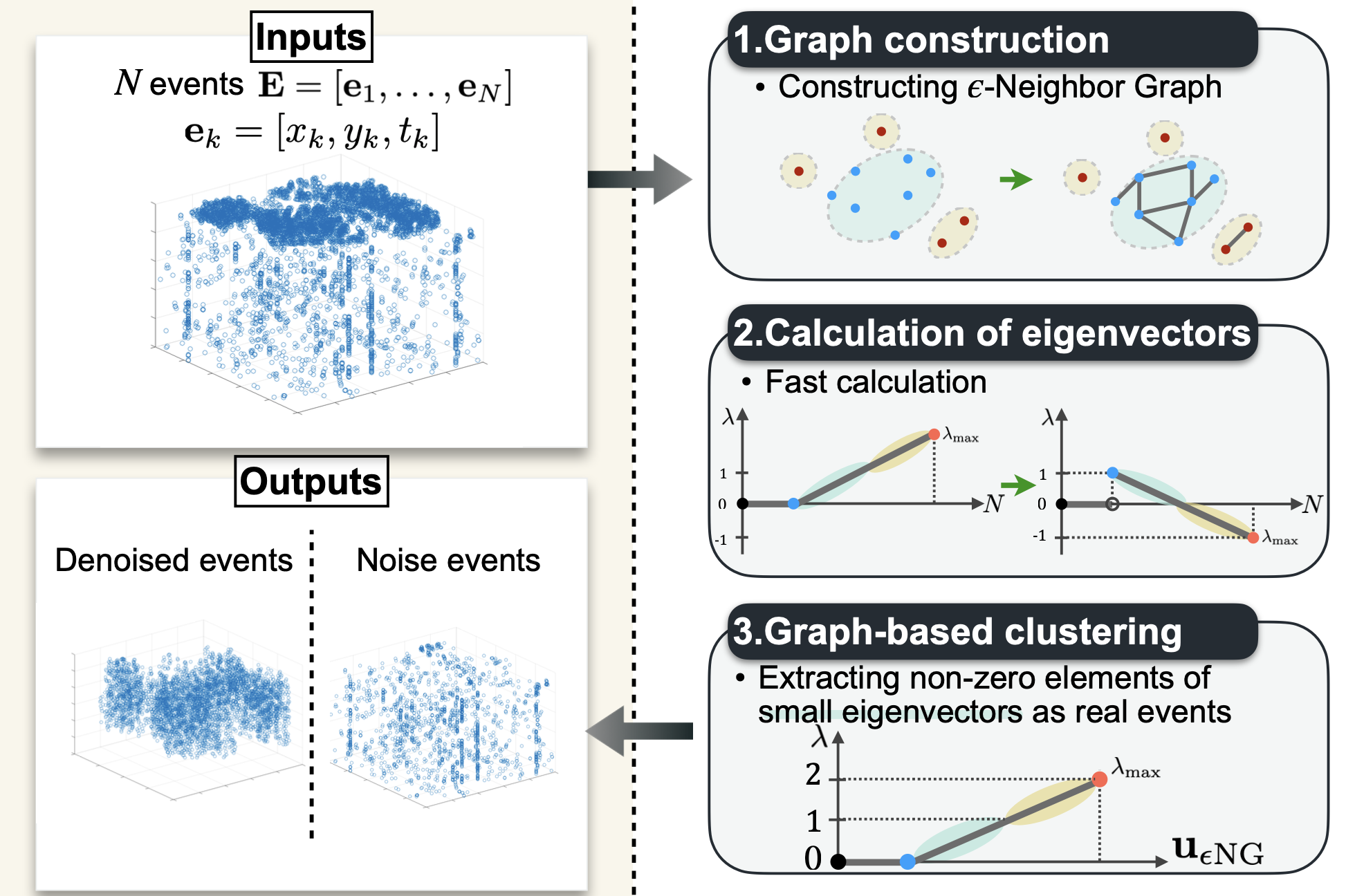} 
\caption{Overview of the proposed method.}
\label{fig:over_flow}
\end{figure}

\subsection{Graph Construction}\label{sub:estimate_e}

In the graph construction step, we use $\epsilon$NG introduced in Section \ref{sub:graph_const}. Although our focus is on $\epsilon$NG, we compare this approach with other typical graph construction methods in Appendix.

The construction of an $\epsilon$NG consists of two steps: 1) temporal scale normalization and 2) estimation of $\epsilon$.


\subsubsection{Temporal scale normalization} Similar to \cite{biGraphbasedObjectClassification2019,schaeferAegnnAsynchronousEventbased2022}, the temporal coordinates in $\mathbf{E}$ are scaled by a factor $\beta$ to ensure that the temporal and spatial coordinates are within a comparable range. 
This scaling can be written as
$\widetilde{\mathbf{E}} = \mathrm{diag}(1,1,\beta)\mathbf{E}$ where the scaling factor $\beta>0$ is empirically set.

\subsubsection{Estimation of $\epsilon$} 
Recall that real events typically form large clusters in the 3-D space, while noise events scatter in fragmented pieces.
To encode the above characteristic into the graph structure,
we utilize the spatiotemporal density. 
Inspired by \cite{ester1996density}, we measure the spatiotemporal density of events by averaging distances between events in their $k$ nearest neighbors.
The density of the $n$th event is defined as follows.
\begin{equation}\label{eq:local_density}
    d_n \coloneqq \frac{1}{k}\sum_{m\in\mathcal{N}_n} \|[\widetilde{\mathbf{E}}]_{:,n}-[\widetilde{\mathbf{E}}]_{:,m}\|^2,
\end{equation}
where $\mathcal{N}_n$ is the $k$ nearest events in the 3-D space for the $n$th node, i.e., $|\mathcal{N}_n|=k$ for all $n$.
That is, an event with a large $d_n$ probably corresponds to an isolated event while a small $d_n$ indicates an event with dense neighbors. 

We then estimate the best $\epsilon$ from $d_n$.
For brevity, suppose that $d_n$ in \eqref{eq:local_density} is sorted in nonincreasing order, i.e., $d_1\geq d_2\geq \dots\geq d_N$. Then, we determine the best $\epsilon$ as the \textit{knee point} \cite{ester1996density} of the sequence of $d_n$:
    \begin{equation}\label{eq:epsilon*}
        \epsilon^* = \argmax_{d_n} L(n)-d_n,
    \end{equation}
    where $L(n)=\frac{d_N-d_1}{N-1}(n-1)+d_1$ is the straight line connecting $(1,d_1)$ and $(N,d_N)$.
Note that, as shown in Fig.~\ref{fig:knee_point}, the knee point is one of $d_n$'s that is maximally away from $L(n)$.

Densely clustered events produce a large number of similar, small $d_n$ values. These $d_n$s form a plateau in the curve by connecting them. In contrast, sparse events result in a steep descent. Thus, the knee point is expected to represent the transition in $d_n$ that separates densely clustered events from sparsely distributed ones (see Fig.~\ref{fig:knee_point}). By setting $\epsilon=\epsilon^*$, this difference of densities is encoded in the constructed graph: Events with dense neighbors are likely to form large connected components, whereas sparse events form small connected components or even remain isolated in $\epsilon$NG.

Finally, we construct the $\epsilon$NG with $\epsilon^*$ in \eqref{eq:epsilon*} (see Section \ref{sub:graph_const}). We assign the edge weights based on the radial basis function (RBF) kernel, i.e., $[\Wm]_{i,j}=\exp(-\gamma\|\widetilde{\mathbf{E}}_{:,i}-\widetilde{\mathbf{E}}_{:,j}\|^2)$ where $\gamma>0$ is the parameter controlling the decay of an RBF kernel. Hereafter, we will use $\mathbf{L}$ as the graph operator of the $\epsilon$NG.

\begin{figure}[t]
\centering
\includegraphics[width = 0.7\linewidth]{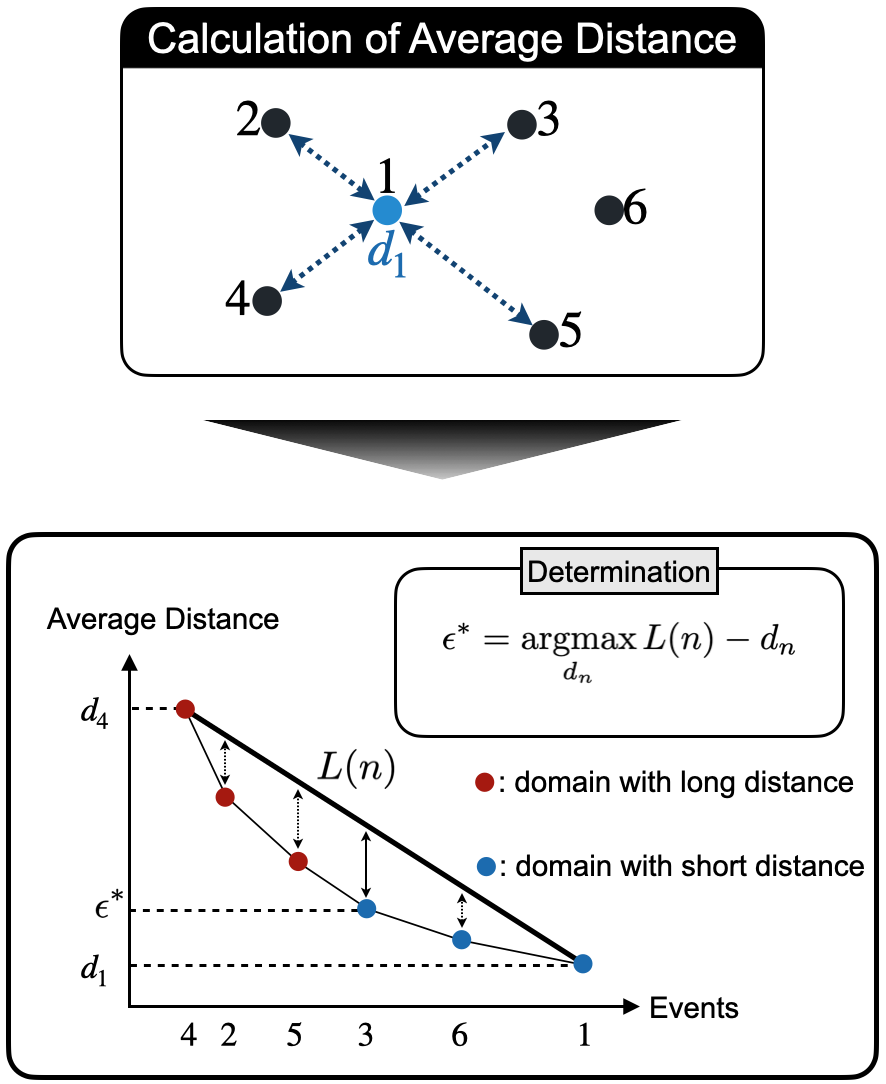} 
\caption{Overview of the estimation of $\epsilon^*$.}
\label{fig:knee_point}
\end{figure}

\subsection{Calculation of Eigenvectors}\label{sub:power}

In the second step, we calculate eigenvectors of the constructed $\epsilon$NG.
Specifically, we use (approximated) eigenvectors corresponding to small eigenvalues.
Two approaches exist for the eigenvector computation.
A naive eigendecomposition is highly accurate but demands a high computational cost. In contrast, approximate iterative methods are computationally efficient with some approximation error. 


Note that, approximation methods like the power method are usually designed to compute eigenvectors associated with large eigenvalues.
Therefore, we need to customize the approximation methods for our purpose.

In a straightforward approach to approximate eigenvectors associated with small eigenvalues, the classical power method is applied to $\mathbf{I}-\underline{\mathbf{L}}=\mathbf{I}-\mathbf{D}^{-1/2}\mathbf{L}\mathbf{D}^{-1/2}$. However, it assumes that $\underline{\mathbf{L}}=\mathbf{D}^{-1/2}\mathbf{L}\mathbf{D}^{-1/2}$ only has a \textit{single} zero eigenvalue, i.e., the graph must be connected.
Note that the $\epsilon$NG we used here is not always connected: The extension of the power method in this case is not trivial.

To overcome the limitations above, we propose a customized power method for the efficient calculation of $\mathbf{u}_{\text{FV}}$ on disconnected graphs:
\begin{proposition}[Customized power method]\label{prop:power_fiedler}
Let $\mathbf{L}$ be the graph Laplacian of an undirected weighted graph $\mathcal{G}$. Consider the following matrix:
\begin{equation}\label{eq:modified_eig}
    \mathbf{S}=\left(\mathbf{I}-\frac{2}{\rho_{\max}}\mathbf{L}\right)\left(\mathbf{I}-\left(\mathbf{I}-\frac{2}{\rho_{\max}}\mathbf{L}\right)^{\omega}\right),
\end{equation}
where $\rho_{\max}$ is the upper bound of $\lambda_{\max}(\mathbf{L})$ defined in \cite{giraultStationaryGraphSignals2015} and $\omega>1$ is some constant. If $\omega$ is sufficiently large, the following update converges to the Fielder vector $\mathbf{u}_{\text{FV}}$.
\begin{equation}\label{eq:power_method}
\mathbf{x}^{l+1}=\mathbf{S}\mathbf{x}^l/\|\mathbf{S}\mathbf{x}^l\|
\end{equation}
where $l$ is the iteration number.
\end{proposition}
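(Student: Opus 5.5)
Since $\mathbf{S}$ is a polynomial in $\mathbf{L}$, the plan is to work entirely in the eigenbasis of $\mathbf{L}$ and reduce the statement to the behaviour of a scalar function on $[0,\rho_{\max}]$. Write $\mathbf{L}=\mathbf{U}\mathbf{\Lambda}\mathbf{U}^\top$ as in Section~\ref{sub:graph_basic}. Then $\mathbf{S}=\mathbf{U}\,\mathrm{diag}(f(\lambda_0),\ldots,f(\lambda_{N-1}))\,\mathbf{U}^\top$ with
\begin{equation*}
f(\lambda)=g(\lambda)\bigl(1-g(\lambda)^{\omega}\bigr),\qquad g(\lambda)=1-\frac{2\lambda}{\rho_{\max}}.
\end{equation*}
Because $0\le\lambda_i\le\lambda_{\max}\le\rho_{\max}$, we have $g(\lambda_i)\in[-1,1]$. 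For every zero eigenvalue, including all of them when $\mathcal{G}$ is disconnected, $g=1$ and therefore $f=0$. This is the point of the second factor: it annihilates the entire null space of $\mathbf{L}$ (the indicator vectors of the connected components), whatever its dimension. The iteration \eqref{eq:power_method} is therefore an ordinary power method on $\mathbf{S}$ restricted to the orthogonal complement of $\ker\mathbf{L}$. Over that complement, the minimizer in Definition~\ref{def:gfv} is exactly the eigenvector of the smallest \emph{nonzero} eigenvalue, which I denote $\lambda_{\min}^{+}$.

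The key step is to show that, for $\omega$ large, $|f|$ over the nonzero spectrum is maximized uniquely at $\lambda_{\min}^{+}$. I would split into two ranges. On $\lambda\in(0,\rho_{\max}/2]$ we have $g\in[0,1)$, $g$ is decreasing in $\lambda$, and $h(s)=s(1-s^\omega)$ is increasing on $[0,(1+\omega)^{-1/\omega}]$. Since $(1+\omega)^{-1/\omega}\to1$ as $\omega\to\infty$, I choose $\omega$ large enough that $g(\lambda_{\min}^{+})=1-2\lambda_{\min}^{+}/\rho_{\max}\le(1+\omega)^{-1/\omega}$. Then $f$ is decreasing in $\lambda$ on the relevant eigenvalues, and among nonzero eigenvalues in this range it is maximized at $\lambda_{\min}^{+}$, where its value is close to $g(\lambda_{\min}^{+})$. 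On $\lambda>\rho_{\max}/2$ we have $g<0$ and $|f(\lambda)|\le|g(\lambda)|(1+|g(\lambda)|^{\omega})$, which can approach $2$ if $\lambda_{\max}$ is near $\rho_{\max}$. This large-eigenvalue end is the main obstacle.

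To handle that end, I would first try to use the fact that the bound $\rho_{\max}$ of \cite{giraultStationaryGraphSignals2015} is not tight, so $|g(\lambda_{\max})|\le1-\delta$ for some $\delta>0$. Then $|f|\le(1-\delta)(1+(1-\delta)^\omega)$ on this range. Comparing this with $f(\lambda_{\min}^{+})\approx1-2\lambda_{\min}^{+}/\rho_{\max}$ requires $2\lambda_{\min}^{+}<\delta\rho_{\max}$, i.e.\ $\lambda_{\min}^{+}$ small relative to the gap $\rho_{\max}-\lambda_{\max}$. This is natural in the intended regime of weak inter-cluster connectivity. I would state it as the implicit condition behind ``$\omega$ sufficiently large''. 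The alternative is to strengthen $\rho_{\max}$ to, say, $2\lambda_{\max}$-type bounds so that $g\ge0$ on the whole spectrum, in which case only the first range matters.

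Convergence then follows from the standard power-method argument. Expand $\mathbf{x}^0=\sum_i c_i\mathbf{u}_i$ and assume $c_{\text{FV}}\neq0$ and that $\lambda_{\min}^{+}$ is simple. After $l$ steps, the component along $\ker\mathbf{L}$ vanishes after one iteration. Every other component decays relative to the Fiedler component like $\bigl(|f(\lambda_i)|/f(\lambda_{\min}^{+})\bigr)^l\to0$. Hence $\mathbf{x}^l\to\pm\mathbf{u}_{\text{FV}}$ at a geometric rate.
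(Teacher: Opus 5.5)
Your argument is correct under the hypotheses you add, and it shares the paper's skeleton: diagonalize $\mathbf{S}$ in the eigenbasis of $\mathbf{L}$, note that $\ker\mathbf{L}$ is annihilated, and show that $\lambda_{\min}^{+}$ becomes dominant. It departs from the paper's proof in two places, and in the second one it is the more accurate of the two.

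\textbf{Finite versus limiting $\omega$.} The paper only passes to the limit $\omega\to\infty$, where the spectrum of $\mathbf{S}$ becomes $(1-2\lambda/\rho_{\max})H(\lambda)$. Your monotonicity of $s(1-s^{\omega})$ on $[0,(1+\omega)^{-1/\omega}]$ instead gives an explicit finite threshold on $\omega$.

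\textbf{Largest eigenvalue versus largest modulus.} The paper reads the iteration as gradient ascent on $\mathbf{x}^\top\mathbf{S}\mathbf{x}$. It therefore only checks that $\lambda_{\min}^{+}$ produces the algebraically largest eigenvalue of $\mathbf{S}$. The power iteration, however, selects the eigenvalue of largest modulus, and eigenvalues above $\rho_{\max}/2$ are sent to negative values whose modulus tends to $2\lambda/\rho_{\max}-1$. When $\lambda_{\max}>\rho_{\max}/2$ and $\delta$ is taken as large as possible, your extra condition reads $\lambda_{\min}^{+}+\lambda_{\max}<\rho_{\max}$. It cannot be dropped. For the path on three nodes (spectrum $0,1,3$), any $\rho_{\max}\in(3,4)$ gives $|1-6/\rho_{\max}|>1-2/\rho_{\max}$. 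So for large $\omega$ and generic $\mathbf{x}^0$ the iterates align, up to an alternating sign, with the top eigenvector of $\mathbf{L}$ rather than the Fiedler vector. Your assumptions $c_{\text{FV}}\neq0$ and simplicity of $\lambda_{\min}^{+}$ are likewise genuinely needed, despite the paper's remark that $\mathbf{x}^0$ is arbitrary.

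\textbf{What to change.} Fix the framing of the condition. It is a hypothesis on the spectrum of $\mathbf{L}$ relative to $\rho_{\max}$ and is independent of $\omega$. So it is not ``implicit in $\omega$ sufficiently large''. Nor does the looseness of the bound of Girault et al.\ supply it, because a positive gap $\delta$ must in addition exceed $2\lambda_{\min}^{+}/\rho_{\max}$. State it explicitly as an added assumption.

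Your alternative is the cleaner repair: rescale so that $1-2\lambda/\rho\ge0$ on the whole spectrum, e.g., replace $\rho_{\max}$ by $2\rho_{\max}$. This makes $\mathbf{S}$ positive semidefinite, so the largest eigenvalue and the largest modulus coincide. With that change even the paper's Rayleigh-quotient reading becomes valid.
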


\begin{proof}
First, we consider the following problem.
\begin{equation}\label{eq:standard_evd}
    \max_{\mathbf{x},\|\mathbf{x}\|=1}\mathbf{x}^\top\mathbf{S}\mathbf{x}.
\end{equation}
Since \eqref{eq:standard_evd} is convex, it can be solved by a gradient descent method \cite{journeeGeneralizedPowerMethod}, i.e., 
\begin{subequations}
\begin{align}
    \mathbf{z}^{l+1}=&\mathbf{x}^{l}-\delta(\mathbf{x}^l-\mathbf{S}\mathbf{x}^l)\label{eq:evd_grad_1}\\
    \mathbf{x}^{l+1}=&\mathbf{z}^{l+1}/\|\mathbf{z}^{l+1}\|,\label{eq:evd_grad_2}
\end{align}
where $\delta$ is the appropriate step size of the update. By setting $\delta=1$, \eqref{eq:evd_grad_1} and \eqref{eq:evd_grad_2} are reduced to \eqref{eq:power_method}.
\end{subequations}
Note that the solution of \eqref{eq:standard_evd} is nothing but the eigenvector associated with the maximum eigenvalue of $\mathbf{S}$.
Since $\mathbf{S}$ is diagonalizable by $\mathbf{U}$ (eigenvector matrix of $\mathbf{L}$),
the diagonal eigenvalue matrix of $\mathbf{S}$ is given by
\begin{equation}\label{eq:eigen_smat}
    \bm{\Lambda}_S= \left(\mathbf{I}-\frac{2}{\rho_{\max}}\mathbf{\Lambda}\right)\left(\mathbf{I}-\left(\mathbf{I}-\frac{2}{\rho_{\max}}\bm{\Lambda}\right)^\omega\right).
\end{equation}
The equation \eqref{eq:eigen_smat} transforms the smallest non-zero eigenvalue in $\mathbf{\Lambda}$ into the largest value in $\bm{\Lambda}_S$.
Note that $|1-\frac{2}{\rho_{\max}}\lambda_i| < 1$ is always satisfied.
By setting $\omega\rightarrow\infty$, \eqref{eq:eigen_smat} converges to
\begin{equation}\label{eq:eigen_smat_2}
    \lim_{\omega\rightarrow \infty}\bm{\Lambda}_S= \left(\mathbf{I}-\frac{2}{\rho_{\max}}\mathbf{\Lambda}\right)H(\mathbf{\Lambda}),
\end{equation}
where $H(\mathbf{\Lambda})$ is a graph high-pass filter such that 
\begin{equation}
        H(\lambda_i)=
        \begin{cases}
            0 & \text{if } \lambda_i=0\\
            1 & \text{otherwise}.
        \end{cases}
    \end{equation}
As a result, $(1-\frac{2}{\rho_{\max}}\lambda_{\max})H(\lambda_{\max})$ in \eqref{eq:eigen_smat_2} corresponds to the minimum non-zero eigenvalue of $\mathbf{L}$, which is associated with the Fiedler vector. This completes the proof.
\end{proof}

Proposition \ref{prop:power_fiedler} shows that the Fiedler vector can be efficiently calculated with the power method by customizing the graph Laplacian so that the customized matrix has the non-zero eigenvalues with the reversed order of the original one.
Since \eqref{eq:standard_evd} is convex, the initial value $\mathbf{x}^0$ in \eqref{eq:power_method} can be set arbitrarily.
The value of $\omega$ corresponds to a trade-off between computational cost and the accuracy of $\uv_{\text{FV}}$.
In this paper, we empirically set $\omega$.

If multiple eigenvectors of $\underline{\mathbf{L}}$ are required, one option would be to combine the customized power method with the deflation technique \cite{burrage1998deflation}. The deflation is performed with the following update:
\begin{equation} \label{eq:deflation}
    \mathbf{S}^{(i+1)}=\mathbf{S}^{(i)}-\lambda^{(i)}\mathbf{x}^{(i)}\mathbf{x}^{(i)\top}
\end{equation}
where $\lambda^{(i)}$ is the $i$th largest eigenvalue and $\mathbf{x}^{(i)}$ is its corresponding eigenvector. The initial operator $\mathbf{S}^{(0)}$ is set to $\mathbf{S}$ derived from \eqref{eq:modified_eig}.
This update orthogonalizes $\mathbf{S}^{(i+1)}$ from the previously extracted eigenvector $\mathbf{x}^{(i)}$.
By applying \eqref{eq:deflation} repeatedly, multiple eigenvectors can be computed.

Another option would be applying existing eigensolver algorithms, such as the locally optimal block preconditioned conjugate gradient (LOBPCG) method \cite{knyazevOptimalPreconditionedEigensolver2000}, to the transformed graph operator $\mathbf{S}$ derived in \eqref{eq:modified_eig}.
We compare the performance according to the eigenvector calculation methods in Section \ref{sec:experiment}.

Generally, the magnitude of eigenvectors associated with small eigenvalues encodes the strength of connectivity \cite{fiedlerAlgebraicConnectivityGraphs1973}. Particularly, when the graph is constructed based on the spatiotemporal densities of events, the elements in these eigenvectors measure their inter-cluster densities. In the following section, we reveal this characteristic and detect real events from the obtained eigenvectors.

\subsection{Event Denoising as Large Component Detection}\label{sec:component_detection}
We detect real events using eigenvectors associated with small eigenvalues. We first consider the case in which real events form a single large connected component within the $\epsilon$NG. We then extend our focus to the detection of real events forming multiple disjoint clusters.

\subsubsection{Detection of Single Large Component}\label{sub:detection}
Real events may form a large graph (real graph) on the $\epsilon$NG while noise events could be separated as many isolated graphs (noise graphs).
This leads to that, each noise graph contains significantly fewer nodes compared to the real graph.
Here, we detect the real graph as the largest connected component. To this aim, we clarify the behavior of the largest connected components in the Fiedler vector $\mathbf{u}_{\text{FV}}$. 

For simplicity, we can suppose that two undirected subgraphs, $\mathcal{G}_{r}$ with $\alpha N$ nodes and $\mathcal{G}_{n}$ with $(1-\alpha)N$ nodes, are in $\mathcal{G}$ while $\mathcal{G}_{r}$ and $\mathcal{G}_{n}$ are disconnected.
First, we assume the following characteristic regarding the cardinalities of events:
\begin{assump}\label{assum:G}
The parameter $\alpha$ is close to $1$.
\end{assump}

\noindent Assumption \ref{assum:G} corresponds to that $\mathcal{G}_{r}$ is a real graph and $\mathcal{G}_{n}$ is a noise graph.
Let us denote the diameter of a graph $\mathcal{G}$ by $D_{\mathcal{G}}=\max_{(u,v)\in\mathcal{E}}d(u,v)$, where $d(u,v)$ is the sum of edge weights along the shortest path between $u$ and $v$.
Following from Assumption \ref{assum:G}, $D_{\mathcal{G}_{n}}$ should be significantly smaller than that of the real graph.
In addition, noise events could be distributed in two ways: as isolated nodes or as very small clusters composed of only a few nodes\footnote{Hot pixel noise may form independent small clusters since it tends to distribute more sparsely than real events along the temporal direction.}. In both cases, we can assume the following property of $\mathcal{G}_{n}$:
\begin{assump}\label{assum:noise_G}
The noise graph $\mathcal{G}_{n}$ is a complete graph.
\end{assump}



With Assumptions \ref{assum:G} and \ref{assum:noise_G}, the magnitudes of the Fiedler vector for each element can be determined by the following proposition.
\begin{proposition}\label{prop:lambda_comparison}
    Suppose that we have the whole graph Laplacian $\mathbf{L}=\emph{\text{diag}}(\mathbf{L}_{r},\mathbf{L}_{n})$ where $\mathbf{L}_{r}$ and $\mathbf{L}_{n}$ are graph Laplacians of $\mathcal{G}_{r}$ and $\mathcal{G}_{n}$, respectively.
    Let us also denote Fiedler vectors associated with the subgraphs, $\mathcal{G}_{r}$ and $\mathcal{G}_{n}$, by $\mathbf{u}_{\text{FV}_r}$ and $\mathbf{u}_{\text{FV}_n}$, respectively.
    The solution of \eqref{eq:bin_clustering_relax} for $\mathbf{L}$, i.e., the Fiedler vector of the whole graph Laplacian $\mathbf{L}$, is given by $[\mathbf{u}_{\text{FV}_r}^\top,\mathbf{0}^\top]^\top$ if Assumption \ref{assum:G} and \ref{assum:noise_G} hold.
    
\end{proposition}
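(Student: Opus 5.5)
The plan is to use the block-diagonal structure of $\mathbf{L}=\mathrm{diag}(\mathbf{L}_r,\mathbf{L}_n)$. Its spectrum is the union of the spectra of $\mathbf{L}_r$ and $\mathbf{L}_n$, and every eigenvector can be chosen supported on one block: either $[\mathbf{v}^\top,\mathbf{0}^\top]^\top$ with $\mathbf{v}$ an eigenvector of $\mathbf{L}_r$, or $[\mathbf{0}^\top,\mathbf{w}^\top]^\top$ with $\mathbf{w}$ an eigenvector of $\mathbf{L}_n$. The constraint $\mathbf{x}^\top\mathbf{L}\mathbf{x}>0$ in \eqref{eq:bin_clustering_relax} excludes the null space, which is spanned by the two block-wise indicator vectors. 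The minimizer of the Rayleigh quotient on the orthogonal complement of the null space is therefore an eigenvector for the smallest \emph{nonzero} eigenvalue of $\mathbf{L}$. (Strictly, the feasible set is not closed, so the argmin should be read as this minimum on the complement of the null space, in line with Definition~\ref{def:gfv}.)

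The second step compares the two candidate eigenvalues. The smallest nonzero eigenvalue of $\mathbf{L}$ is $\min\{\lambda_1(\mathbf{L}_r),\lambda_1(\mathbf{L}_n)\}$. By Assumption~\ref{assum:noise_G}, $\mathcal{G}_n$ is complete. For an unweighted complete graph on $m=(1-\alpha)N$ nodes, every nonzero eigenvalue equals $m$. For RBF weights, I would bound $\lambda_1(\mathbf{L}_n)\ge m\,w_{\min}$ with $w_{\min}=\min_{i,j}[\mathbf{W}_n]_{i,j}\ge e^{-\gamma\epsilon^2}$, using the fact that $\mathbf{L}_n-w_{\min}\mathbf{L}_{K_m}$ is itself a Laplacian and hence positive semidefinite.

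For the real graph, I would upper bound the algebraic connectivity in terms of its size and diameter. One route is Fiedler's bound $\lambda_1(\mathbf{L}_r)\le \frac{\alpha N}{\alpha N-1}\min_i[\mathbf{D}_r]_{i,i}$, combined with the remark that an $\epsilon$NG built from sparse spatiotemporal data has bounded degree. A cleaner route is a diameter-type bound such as $\lambda_1(\mathbf{L}_r)\le c/D_{\mathcal{G}_r}^2$ (up to weight factors), obtained by plugging into the Rayleigh quotient a test vector that varies linearly along a shortest path.

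Assumption~\ref{assum:G} gives that $\alpha N$ is large and $D_{\mathcal{G}_r}$ is large, so $\lambda_1(\mathbf{L}_r)$ is small. Combining the two bounds then yields $\lambda_1(\mathbf{L}_r)<\lambda_1(\mathbf{L}_n)$. Hence the Fiedler vector of $\mathbf{L}$ is supported on the real block and equals $[\mathbf{u}_{\text{FV}_r}^\top,\mathbf{0}^\top]^\top$, after normalization and up to sign.

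The main obstacle is making this eigenvalue comparison rigorous. The assumptions are qualitative: ``$\alpha$ close to 1'' says nothing directly about degrees or diameter. So I would state an explicit sufficient condition, for instance $\min_i[\mathbf{D}_r]_{i,i}\,\frac{\alpha N}{\alpha N-1}<(1-\alpha)N\,w_{\min}$ or the diameter version. I would argue that this condition holds under Assumptions~\ref{assum:G}--\ref{assum:noise_G}, and treat the tie case $\lambda_1(\mathbf{L}_r)=\lambda_1(\mathbf{L}_n)$ as non-generic.

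One degenerate case needs separate handling: $\mathcal{G}_n$ a single isolated node. Then $\mathbf{L}_n=0$ has no nonzero eigenvalue, so the claim follows trivially from the first step.
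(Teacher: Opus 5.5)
Your block-diagonal reduction is correct and matches the paper's first step, the dichotomy \eqref{eq:fiedler_single}. Several other pieces are also fine:
your reading of \eqref{eq:bin_clustering_relax} on the complement of the null space, the bound $\lambda_1(\mathbf{L}_n)\ge m\,w_{\min}$ via $\mathbf{L}_n-w_{\min}\mathbf{L}_{K_m}\succeq 0$, Fiedler's degree bound, and the isolated-node case. The gap is the comparison $\lambda_1(\mathbf{L}_r)<\lambda_1(\mathbf{L}_n)$, which you defer to ``I would argue that this condition holds under Assumptions~\ref{assum:G}--\ref{assum:noise_G}.'' For the weighted combinatorial Laplacian it does not follow from them. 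With $m=(1-\alpha)N$, Assumption~\ref{assum:G} makes $m$ small, so your lower bound $m\,w_{\min}$ gets weaker, not stronger. Concretely, let $\mathcal{G}_r$ be a long path with (near-)unit weights and $\mathcal{G}_n=K_2$ with edge weight $w$. Then $\lambda_1(\mathbf{L}_n)=2w$ falls below $\lambda_1(\mathbf{L}_r)\approx\pi^2/(\alpha N)^2$ once $w$ is small, e.g.\ an RBF edge of length near $\epsilon$ with $\gamma\epsilon^2$ large. Both assumptions hold, yet the Fiedler vector of $\mathbf{L}$ sits on the noise block. So no degree or diameter condition can be derived from the assumptions to close this step. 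Separately, $\lambda_1\le c/D^2$ is false in general: bounded-degree expanders have $D\sim\log n$ with a constant gap. The correct diameter bound (Alon--Milman) carries a factor of order $\Delta\log^2 n$. Also, a test vector that is linear along a single shortest path does not control the Rayleigh quotient's denominator on the rest of the graph.

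What the paper does instead is compare \emph{normalized} algebraic connectivities. For an (unweighted) complete graph on $m\ge 2$ nodes, $a(\mathcal{G}_n)=m/(m-1)>1$ for every $m$, and for $m=2$ the value $2$ does not even depend on the edge weight. So small noise components help rather than hurt. Chung's bound \eqref{eq:upper_b} with $D_{\mathcal{G}_r}\gg 1$ then gives $a(\mathcal{G}_r)<1$. The constant $1$ separates the two spectra without any relation between $m$, $w_{\min}$ and the degrees of $\mathcal{G}_r$, and it is the same threshold reused in Corollary~\ref{coro:extended} and \eqref{eq:large_comp_detect_multi}. That step is itself heuristic, since it reads a large diameter off Assumption~\ref{assum:G}. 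Moreover, the paper's transfer back to $\mathbf{L}$ rests on its footnote that the eigenvalue ordering is unchanged between $\mathbf{L}$ and $\underline{\mathbf{L}}$, and this fails across components: $K_2$ has nonzero eigenvalue $2$ in both, while $K_{100}$ has $100$ versus $100/99$. Your choice to stay with $\mathbf{L}$ therefore exposes a real weakness in the paper's argument. But it means your quantitative inequality must be stated as an \emph{additional} hypothesis rather than derived from Assumptions~\ref{assum:G}--\ref{assum:noise_G}. Alternatively, prove the statement for $\underline{\mathbf{L}}$, where those assumptions (heuristically) suffice.
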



\begin{proof}
    For conciseness, we consider symmetric normalized Laplacian of $\mathcal{G}_{r}$ and $\mathcal{G}_{n}$, i.e., $\underline{\mathbf{L}_{r}}$ and $\underline{\mathbf{L}_{n}}$\footnote{This relies on the fact that the order of magnitudes of the eigenvalues do not change between a combinational graph Laplacian and a symmetric normalized Laplacian.}.
    Denoting the algebraic connectivity of $\underline{\mathbf{L}_{r}}$ and $\underline{\mathbf{L}_{n}}$ by $a(\mathcal{G}_{r})$ and $a(\mathcal{G}_{n})$, respectively,
    candidates for $\mathbf{u}_{\text{FV}}$
    are given as follows:
    \begin{equation}\label{eq:fiedler_single}
        \mathbf{u}_{\text{FV}}=
        \begin{cases}
            [\mathbf{u}_{\text{FV}_r}^\top,\mathbf{0}^\top]^\top\quad & \text{if } a(\mathcal{G}_{r}) < a(\mathcal{G}_{n})\\
            [\mathbf{0}^\top,\mathbf{u}_{\text{FV}_n}^\top]^\top\quad & \text{otherwise}.
        \end{cases}
    \end{equation}
    Under Assumption \ref{assum:noise_G}, the eigenvalues of $\underline{\mathbf{L}_{n}}$ are $0$ and $(1-\alpha)N/((1-\alpha)N-1)$ with multiplicity $(1-\alpha)N-1$ \cite{chungSpectralGraphTheory1997}.
    Accordingly, $a(\mathcal{G}_{n})$ satisfies the following inequality.
    \begin{equation}\label{eq:lower_b}
        a(\mathcal{G}_{n}) = \frac{(1-\alpha)N}{(1-\alpha)N-1} > 1.
    \end{equation}

    We then consider $\mathcal{G}_{r}$.
    Let $D_{\max}$ and $D_{\mathcal{G}_{r}}$ be the maximal degree and the diameter of $\mathcal{G}_{r}$, respectively. 
    The upper bound of $a(\mathcal{G}_{r})$ can be given as follows \cite{chungSpectralGraphTheory1997}:
    \begin{equation}\label{eq:upper_b}
            a(\mathcal{G}_{r}) \leq 1-2\frac{\sqrt{D_{\max}-1}}{D_{\max}}\left(1-\frac{2}{D_{\mathcal{G}_{r}}}\right)+\frac{2}{D_{\mathcal{G}_{r}}}.
    \end{equation}
    From Assumption \ref{assum:G}, we can assume that $D_{\mathcal{G}_{r}}\gg 1$ and this leads to
    $2/D_{\mathcal{G}_{r}}\approx0$.
    Therefore, \eqref{eq:upper_b} can be approximated as follows:
    \begin{equation}\label{eq:upper_b_relax}
        a(\mathcal{G}_{r}) \leq 1- 2\frac{\sqrt {D_{\max}-1}}{D_{\max}}<1.
    \end{equation}
    From \eqref{eq:lower_b} and \eqref{eq:upper_b_relax}, we can obtain the following relationship between $a(\mathcal{G}_{n})$ and $a(\mathcal{G}_{r})$:
    \begin{equation}
        0 <a(\mathcal{G}_{r}) <1<a(\mathcal{G}_{n}).
    \end{equation}
    This completes the proof.
\end{proof}

In \eqref{eq:fiedler_single}, real events correspond to nonzero entries in $\mathbf{u}_{\text{FV}}$. 
Proposition \ref{prop:lambda_comparison} shows that we can detect real events as the largest component in the raw event stream with the magnitude of the elements in $\mathbf{u}_{\text{FV}}$. 


Let $\mathbf{y}\in\{0,1\}^N$ be the binary labels of events, where $[\mathbf{y}]_i=1$ and $[\mathbf{y}]_i=0$ indicate real and noise events, respectively. Consequently, we detect the large component of the graph as follows:
\begin{equation}\label{eq:large_comp_detect}
    [\mathbf{y}]_i=
    \begin{cases}
       1 & \text{if }|[\mathbf{u}_{\text{FV}}]_i| \geq \eta\\
       0 & \text{otherwise},
    \end{cases}
\end{equation}
where $\eta>0$ is a small constant. 


\subsubsection{Detection of Multiple Large Connected Components}\label{sub:multi-denoising}
Until now, we have assumed that real events form a single large component in the constructed graph.
We extend Proposition \ref{prop:lambda_comparison} to the detection of multiple real graphs. 

We consider $n_r$ connected components $\{\mathcal{G}_{r_i}\}_{i=1}^{n_r}$ in $\mathcal{G}$, each containing $\alpha_i N$ nodes, and $n_n$ connected components $\{\mathcal{G}_{n_j}\}_{j=1}^{n_n}$, each containing $\beta_j N$ nodes. We assume that $\{\mathcal{G}_{r_i}\}_{i=1}^{n_r}$ and $\{\mathcal{G}_{n_j}\}_{j=1}^{n_n}$ are disconnected with each other, which satisfy $\sum_{i=1}^{n_r}\alpha_{i}+\sum_{j=1}^{n_n}\beta_{j}=1$. Similar to Assumption \ref{assum:G}, we assume the following characteristic regarding the cardinalities of evens.

\begin{assump}\label{assum:G_m}
 $\alpha_i \gg \beta_j $ for all $i$ and $j$.
\end{assump}

\noindent Under Assumption \ref{assum:G_m}, we assume that $\{\mathcal{G}_{r_i}\}_{i=1}^{n_r}$ and $\{\mathcal{G}_{n_j}\}_{j=1}^{n_n}$ correspond to real graphs and noise graphs, respectively. 
Therefore, Assumption \ref{assum:G_m} requires that all real graphs are much larger than noise graphs. 
For detecting multiple real graphs, we extend Proposition \ref{prop:lambda_comparison} to the following corollary:
\begin{corollary}\label{coro:extended}
     Let us denote the normalized graph Laplacians of real graphs $\{\mathcal{G}_{r_i}\}_{i=1}^{n_r}$ and noise graphs $\{\mathcal{G}_{n_j}\}_{j=1}^{n_n}$ by $\{\underline{\mathbf{L}}_{r_i}\}_{i=1}^{n_r}$ and $\{\underline{\mathbf{L}}_{n_j}\}_{j=1}^{n_n}$, respectively. The normalized graph Laplacian of $\mathcal{G}$ is given by $\underline{\mathbf{L}}=\emph{\text{diag}}(\underline{\mathbf{L}}_{r_1},\cdots,\underline{\mathbf{L}}_{n_r},\underline{\mathbf{L}}_{n_1},\cdots,\underline{\mathbf{L}}_{n_n})$. 
    If Assumptions \ref{assum:noise_G} and \ref{assum:G_m} hold, the eigenvectors corresponding to the eigenvalues within interval $(0, 1)$ are eigenvectors associated with $\{\mathcal{G}_{r_i}\}_{i=1}^{n_r}$.
\end{corollary}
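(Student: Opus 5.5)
The argument follows the proof of Proposition~\ref{prop:lambda_comparison}, applied componentwise. Since $\underline{\mathbf{L}}$ is block diagonal, its spectrum is the union (with multiplicities) of the spectra of the blocks $\underline{\mathbf{L}}_{r_i}$ and $\underline{\mathbf{L}}_{n_j}$. Every eigenvector of a block, zero-padded outside that block, is an eigenvector of $\underline{\mathbf{L}}$, and these padded vectors give an orthonormal eigenbasis. The plan has two steps: show that no eigenvalue of any noise block lies in $(0,1)$, and show that the real blocks supply the eigenvalues in $(0,1)$. Then every eigenvector of $\underline{\mathbf{L}}$ for an eigenvalue in $(0,1)$ is supported on $\bigcup_i \mathcal{G}_{r_i}$.

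\textbf{Noise blocks.} By Assumption~\ref{assum:noise_G}, each $\mathcal{G}_{n_j}$ is a complete graph on $m_j=\beta_j N$ nodes. As used in \eqref{eq:lower_b}, the spectrum of $\underline{\mathbf{L}}_{n_j}$ is $\{0\}\cup\{m_j/(m_j-1)\}$, and every nonzero eigenvalue exceeds $1$. The only exception is an isolated node ($m_j=1$), which contributes only the eigenvalue $0$. So every noise eigenvalue lies in $\{0\}\cup(1,2]$, and none lies in the open interval $(0,1)$.

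\textbf{Real blocks.} For each $\mathcal{G}_{r_i}$, Assumption~\ref{assum:G_m} says $\alpha_i N$ is large, so, as in the proof of Proposition~\ref{prop:lambda_comparison}, the diameter satisfies $D_{\mathcal{G}_{r_i}}\gg 1$. Applying the bound \eqref{eq:upper_b} to $\mathcal{G}_{r_i}$ and dropping the $2/D_{\mathcal{G}_{r_i}}$ terms gives $0<a(\mathcal{G}_{r_i})\le 1-2\sqrt{D_{\max,i}-1}/D_{\max,i}<1$, exactly as in \eqref{eq:upper_b_relax}. Each real component therefore contributes at least one eigenvalue in $(0,1)$, namely its algebraic connectivity, with eigenvector $[\mathbf{0}^\top,\mathbf{u}_{\text{FV}_{r_i}}^\top,\mathbf{0}^\top]^\top$.

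\textbf{Combining, and handling degeneracy.} Let $\lambda\in(0,1)$ be an eigenvalue of $\underline{\mathbf{L}}$. Its eigenspace is spanned by padded block eigenvectors of that value, and by the noise step none of these come from a noise block. So every vector in the eigenspace vanishes on the noise nodes, even if $\lambda$ is repeated across several real components. This is the key point: the result holds for arbitrary eigenspaces, not only for a particular choice of eigenbasis.

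The main obstacle is the real-block step. The bound \eqref{eq:upper_b} is an approximation, valid only when $D_{\mathcal{G}_{r_i}}$ is large, and Assumption~\ref{assum:G_m} controls node counts rather than diameters. I will argue it as in Proposition~\ref{prop:lambda_comparison}, treating "large component implies large diameter" as part of the modeling assumption. I will also note that the strictly decisive ingredient is the noise step, since that alone establishes the direction "eigenvalue in $(0,1)$ implies support on the real graphs."
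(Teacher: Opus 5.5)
Your proposal is correct and follows essentially the same route as the paper. It uses the block-diagonal decomposition, the complete-graph spectrum to get $a(\mathcal{G}_{n_j})>1$ for the noise blocks, and the relaxed diameter bound \eqref{eq:upper_b_relax} to get $a(\mathcal{G}_{r_i})<1$ for the real blocks. Your eigenspace formulation is slightly more careful than the paper's claim that every eigenvector of $\underline{\mathbf{L}}$ is supported on a single component, which fails when an eigenvalue is shared by several blocks, and you rightly note that only the noise-block step is needed for the stated implication.
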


\begin{proof}
    We denote the nodes in $\mathcal{G}_{r_i}$ and $\mathcal{G}_{n_j}$ by $\mathcal{V}_{r_i}$ and $\mathcal{V}_{n_j}$, respectively.
    Since $\mathcal{G}_{r_i}$ and $\mathcal{G}_{n_j}$ do not have connections between them, an eigenvector of $\underline{\mathbf{L}}$, associated with $\mathcal{G}$, has the form of
    \begin{equation}
        \underline{\mathbf{u}}=
        \mathbbm{1}_{\mathcal{C}}(\underline{\mathbf{u}}) \quad \exists\mathcal{C}\in \{\mathcal{V}_{r_i}\}_{i=1}^{n_r}\cup \{\mathcal{V}_{n_j}\}_{j=1}^{n_n},
    \end{equation}
    where $\mathbbm{1}_\mathcal{C}$ is the indicator function over the subset $\mathcal{C}$, i.e., $[\mathbbm{1}_\mathcal{C}(\mathbf{x})]_k=[\mathbf{x}]_k$ if $k\in\mathcal{C}$ and $[\mathbbm{1}_\mathcal{C}(\mathbf{x})]_k=0$ otherwise. Therefore, $[\underline{\mathbf{u}}]_{i\in\mathcal{C}}$ is identical to an eigenvector corresponding to the subgraph $\mathcal{C}$.
    
    From Assumptions \ref{assum:noise_G} and \ref{assum:G_m}, noise events form small complete graphs. Similar to \eqref{eq:lower_b}, we have $a(\mathcal{G}_{n_j}) > 1$ for all $n_j$.
    We also have $a(\mathcal{G}_{r_i}) < 1$ for all $r_j$ from \eqref{eq:upper_b_relax}.
    Consequently, we can conclude that
    \begin{equation}\label{eq:real_fv_int}
        0 < a(\mathcal{G}_{r_i}) < 1 < a(\mathcal{G}_{n_j})\hspace{1ex} \forall i\in\{1,\ldots,n_r\},\forall j\in\{1,\ldots,n_n\}.
    \end{equation}
\end{proof}

Followed by Corollary \ref{coro:extended}, we can identify real events by examining the non-zero elements of eigenvectors associated with eigenvalues within $(0, 1)$. Note that we do not require any knowledge of the number of objects. Then, we can reformulate \eqref{eq:large_comp_detect} as
\begin{equation}\label{eq:large_comp_detect_multi}
    [\mathbf{y}]_i=
    \begin{cases}
       1 & \text{if there exists } j \text{ satisfying }|[\underline{\mathbf{u}}_{j}]_i| \geq \delta \text{ and } \underline{\lambda}_j < 1\\
       0 & \text{otherwise},
    \end{cases}
\end{equation}
where $(\underline{\lambda}_j,\underline{\mathbf{u}}_j)$ is the $j$th eigenpair of the normalized graph Laplacian $\underline{\mathbf{L}}$. 

\section{Computational Complexity}\label{sec:complexity}
Here, we verify that the proposed customized power method is more computationally efficient than existing eigendecomposition (abbreviated as EVD) approaches by deriving the complexity of \eqref{eq:power_method}.

\noindent\textbf{Single connected component detection:} In a native computation based on EVD, obtaining eigenvectors of $\mathbf{L}_{\epsilon\text{NG}}$
    requires the complexity $O(N^3)$. The power method in \eqref{eq:power_method} requires the complexity $O(I(N^2+N))$, where $I$ is the number of iterations. In Proposition~\ref{prop:power_fiedler}, we reorder the eigenvectors of $\mathbf{L}_{\epsilon\text{NG}}$ to the reverse order with \eqref{eq:modified_eig}. This is implemented by the complexity $O(\omega N |\mathcal{E}_{\epsilon\text{NG}}|)$, where $|\mathcal{E}_{\epsilon\text{NG}}|$ is the number of edges in the $\epsilon$NG. As a result, the proposed method requires the complexity $O (N(IN+I+\omega |\mathcal{E}_{\epsilon\text{NG}}|))$. As long as the graph is sparse, it is computationally more efficient than the naive implementation of the EVD.

    \medskip

\noindent\textbf{Multiple connected component detection:}
    Suppose that $k$ eigenvectors are required to detect multiple objects.
    In practice, $k=20$ would be enough to detect real events from raw event streams.
    The deflation step in \eqref{eq:deflation} is implemented by the complexity $O(N^2)$, while the power method requires $O(I(N^2+N))$ to compute each eigenvector as mentioned above.
    Consequently, obtaining $k$ eigenvectors via a repeated application of the power method results in a total complexity of $O(kN(IN+N+I))$. Alternatively, LOBPCG requires computing multiple eigenvectors with the complexity of $O(kIN(N+k))$.
    The proposed method is computationally more efficient than the repeated power method if $kI<N+I$.

\section{Event Denoising Experiments}\label{sec:experiment}
In this section, we evaluate the effectiveness of the proposed method via denoising experiments with synthetic and real-world data.

\subsection{Denoising with Single Cluster of Real Events}

Here, we focus on the scenario in which the real events form a single cluster.

\begin{figure*}[t]
\centering
\subfigure[Ground truth]{\includegraphics[width = 0.48\columnwidth]{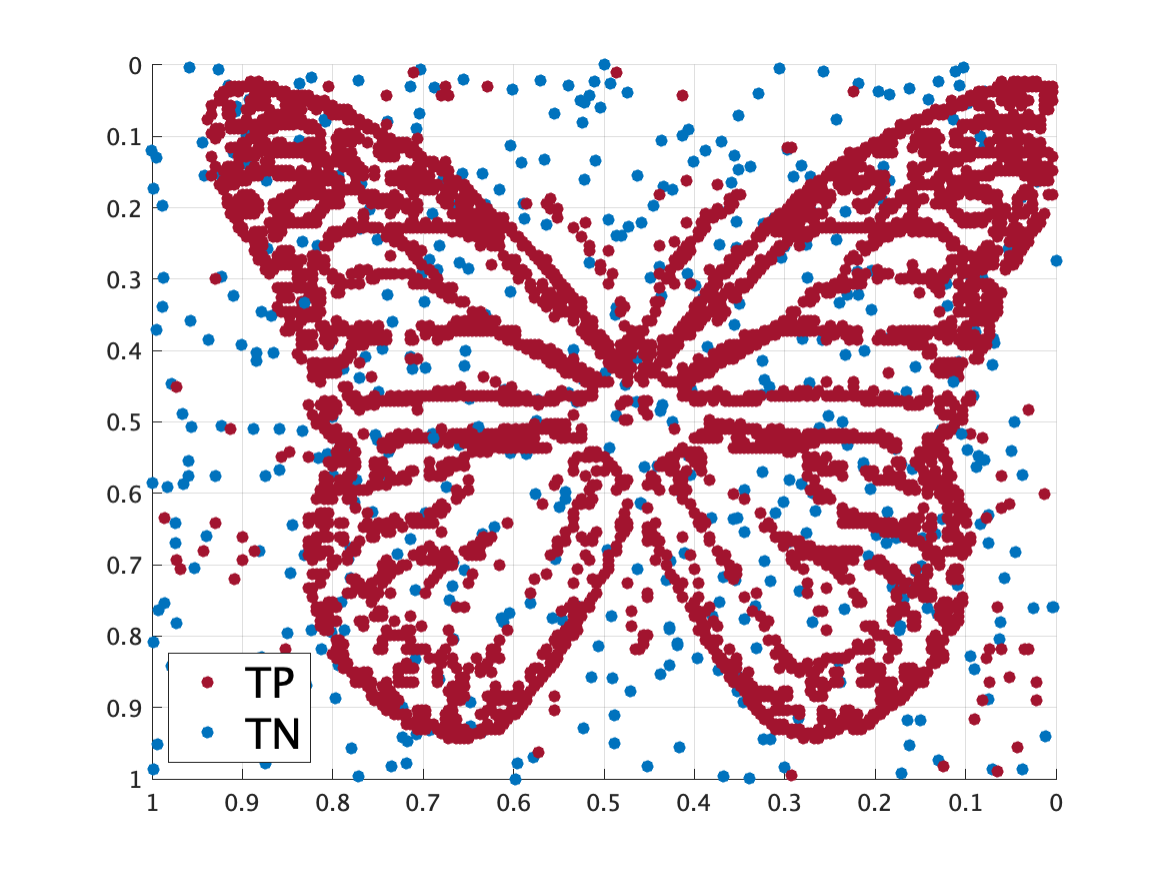} \label{fig:groundtruth_bike}} 
\subfigure[Denoised: proposed w/ EVD]{\includegraphics[width = 0.48\columnwidth]{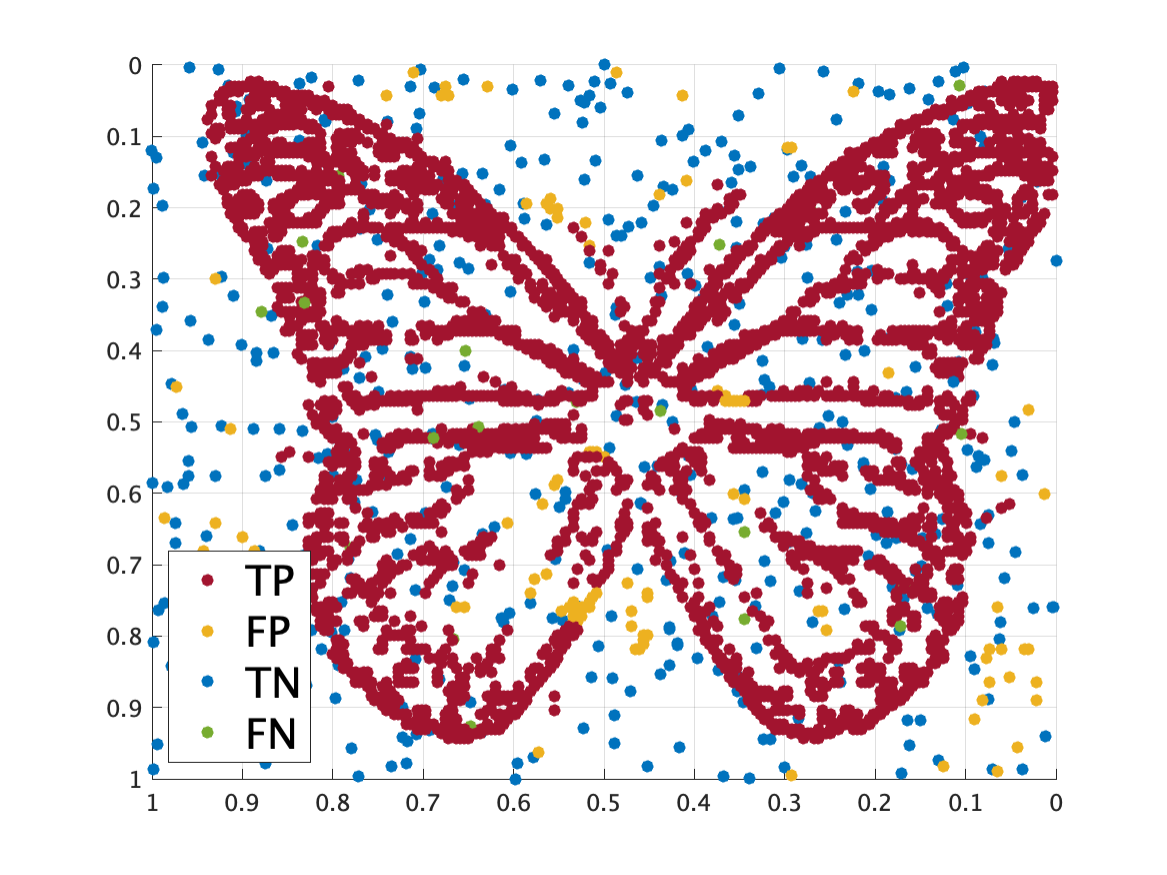} \label{fig:epsilon_bike}}
\subfigure[Denoised: proposed w/ Power Method]{\includegraphics[width = 0.48\columnwidth]{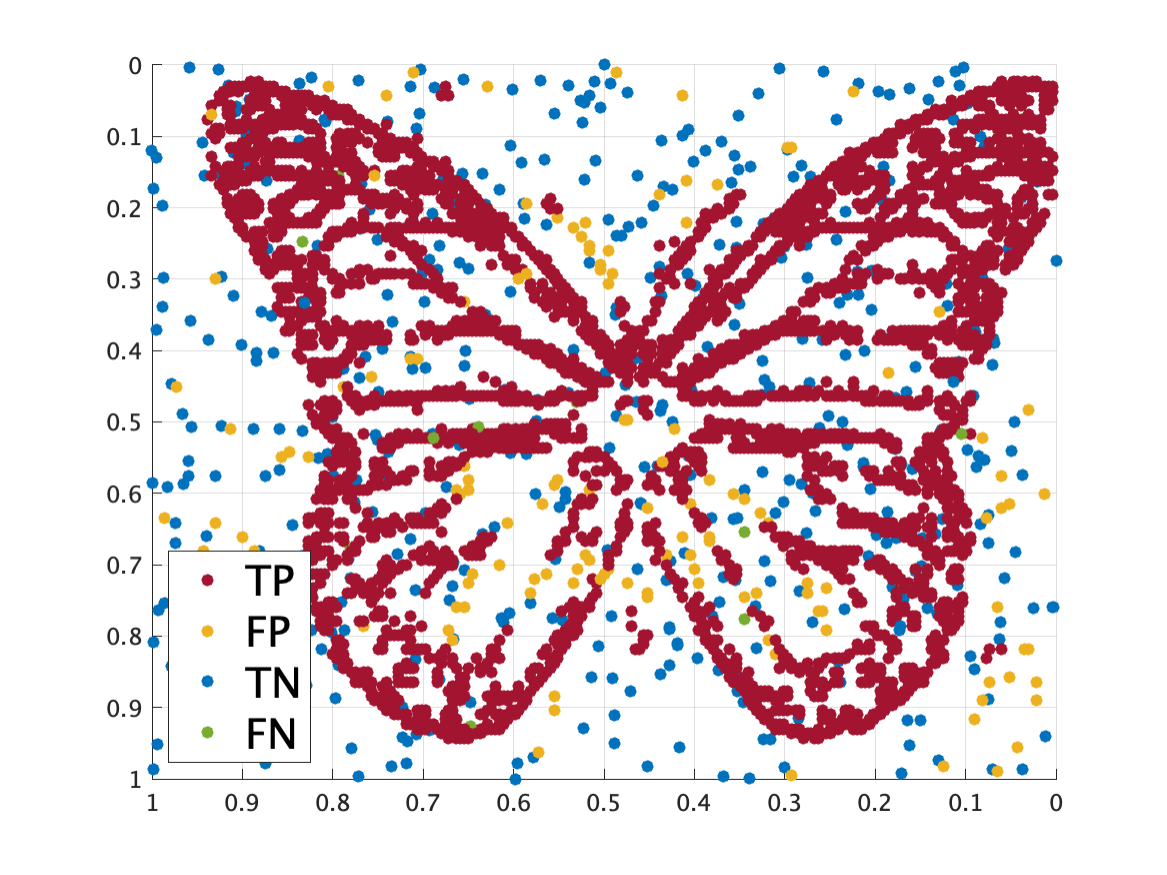} \label{fig:power_bike}}
\subfigure[Denoised: proposed w/ LOBPCG]{\includegraphics[width = 0.48\columnwidth]{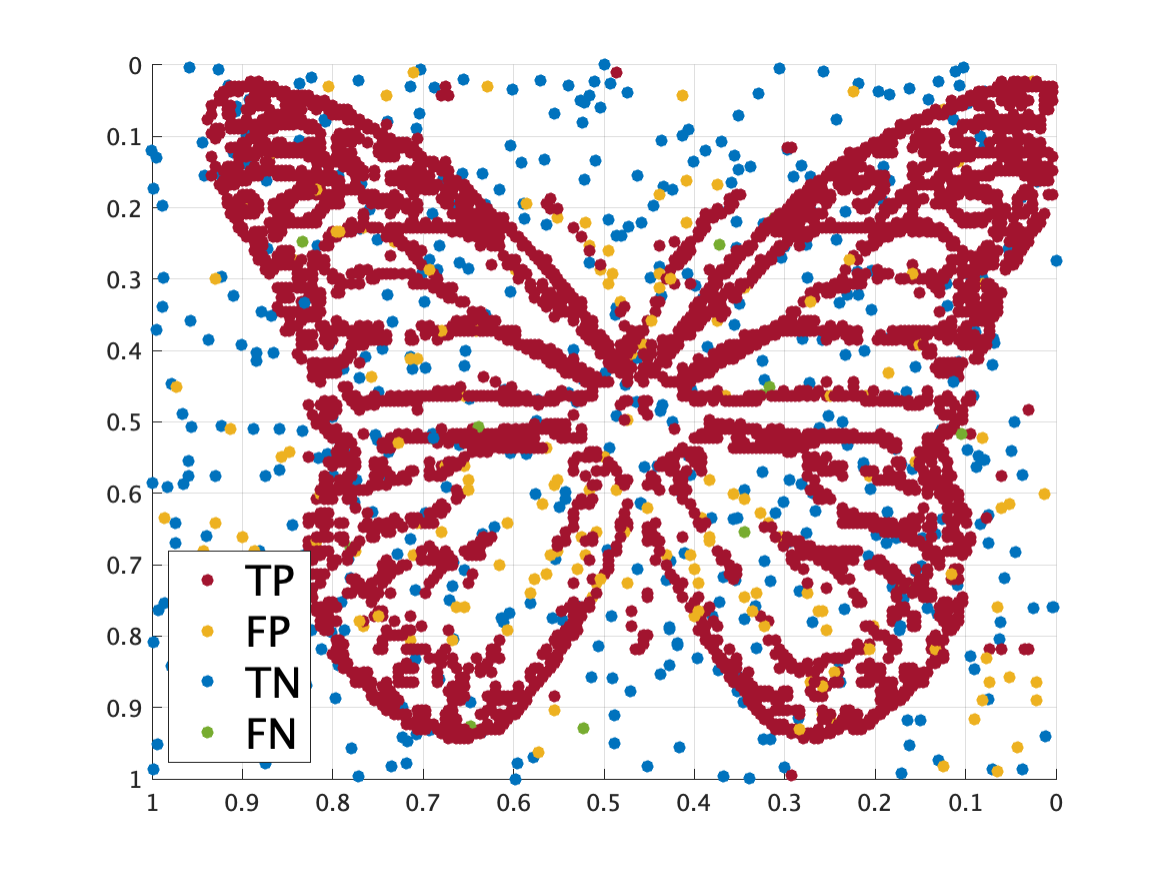} \label{fig:sin_lopcg}}
\subfigure[Denoised: Centroid Filter]{\includegraphics[width = 0.48\columnwidth]{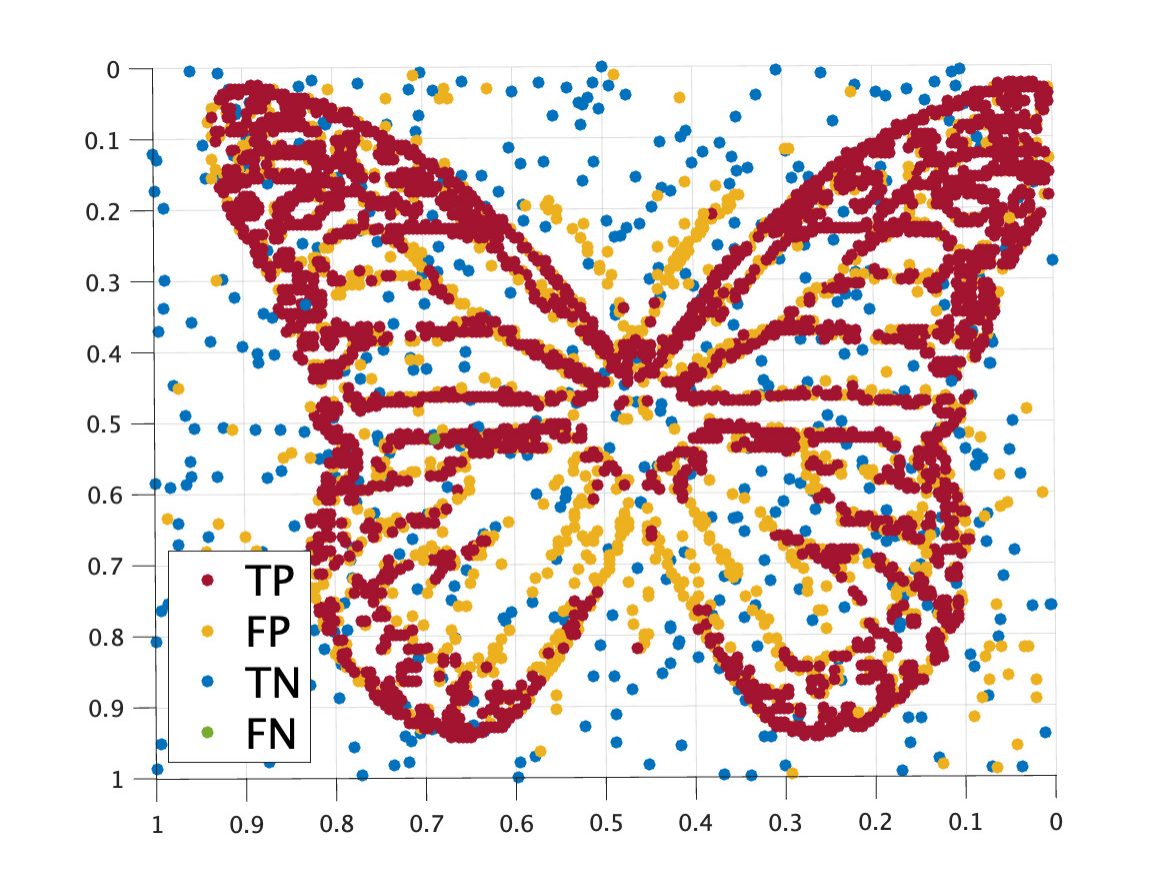} \label{fig:Feng_bike}} 
\subfigure[Denoised: Density Filter]{\includegraphics[width = 0.48\columnwidth]{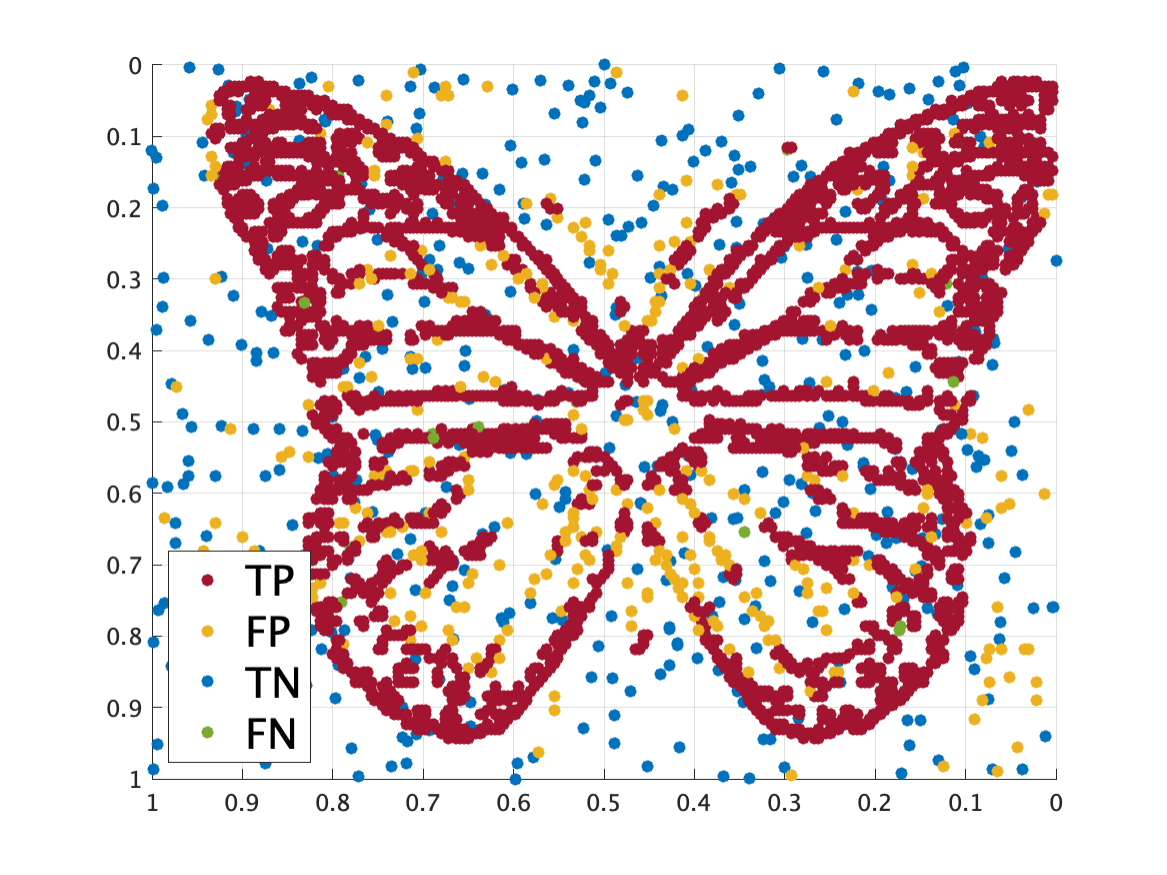} \label{fig:Density_bike}}
\subfigure[Denoised: GNN-Transformer]{\includegraphics[width = 0.48\columnwidth]{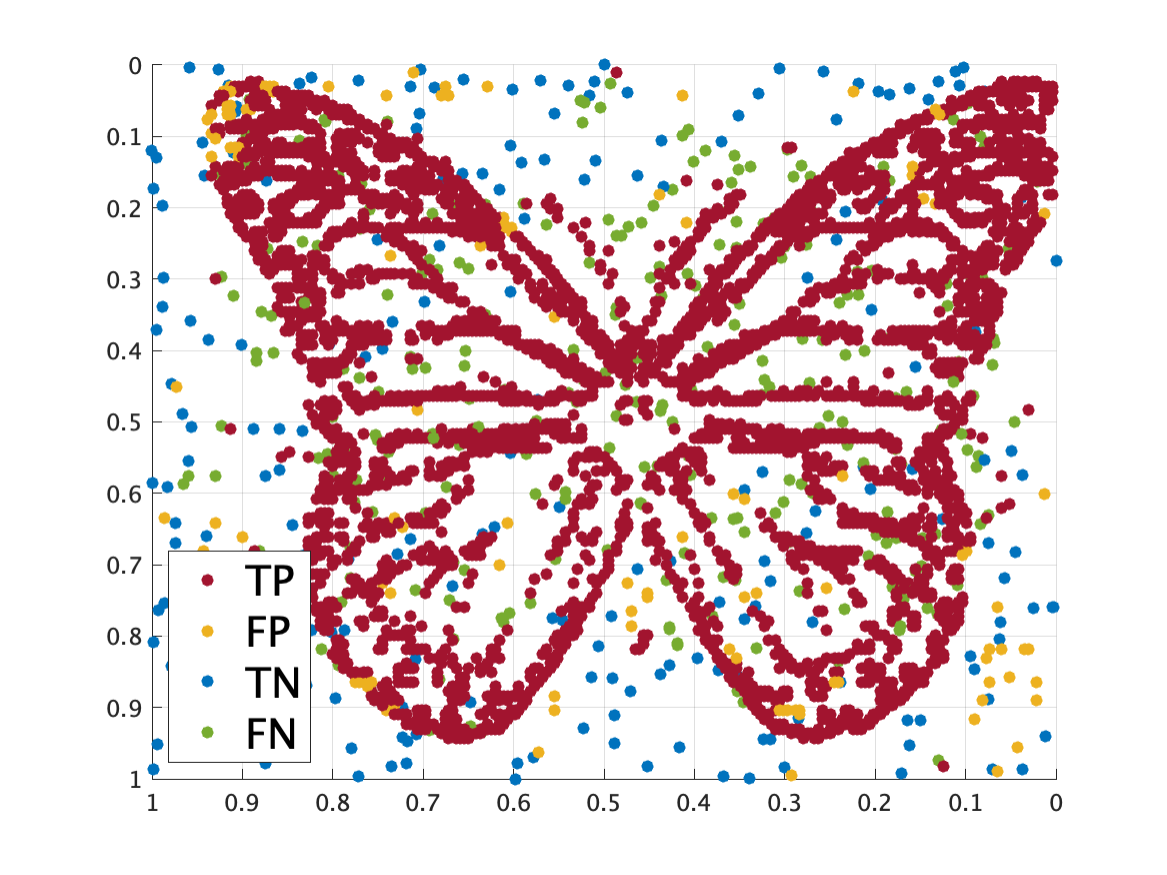} \label{fig:KoGTL_bike}}
\centering
\caption[Scatter plots of denoised \textit{butterfly} in the N-Caltech101 dataset.]{Scatter plots of denoised \textit{butterfly} in the N-Caltech101 dataset. Red points: detected real events (TP). Orange points: undetected real events (FP). Blue points: detected noise events (TN). Green points: undetected noise events (FN).}
\label{fig:synthetic_image}
\end{figure*}

\subsubsection{Setup}\label{sub:setups}
We perform denoising experiments with the following two datasets.\\
\noindent\textbf{N-Caltech101 dataset \cite{orchardConvertingStaticImage2015}:}
    This synthetic dataset consists of event streams capturing $101$ different objects.
    These event streams are synthesized from static 2-D images. 
    We select seven objects from the dataset. In each of these objects, the real events form a \textit{single cluster}.
    Since the dataset is not annotated, we label all original events as real events and artificially add two types of noise events to these events: BA noise is represented by uniformly distributed events, and hot pixel noise is simulated by events continuously distributed along the temporal direction. 
    
    To evaluate the effectiveness of the proposed method under varying noise conditions, we synthesize three event streams where the total noise rate is fixed at $12$\% of the original events.
    Specifically, the noise ratios, i.e., (BA noise, hot pixel noise), are set to $(6\%,6\%)$, $(8\%,4\%)$, and $(10\%,2\%)$, respectively\footnote{In practice, the number of BA noise is typically greater than that of hot pixel noise \cite{gallegoEventbasedVisionSurvey2020}.}.

    \medskip
    
\noindent\textbf{ED-KoGTL dataset \cite{alkendiNeuromorphicCameraDenoising2022}:}
     This is an annotated real-world dataset. It records an event stream of a single moving object for five seconds.
     The annotation is carried out by referencing frame-based images synchronized with the event-based camera.
In the graph construction introduced in Section~\ref{sub:estimate_e}, we normalize the time coordinate in the event stream with $\beta=50$.
After constructing the $\epsilon$NG, we use the RBF kernel in \eqref{eq:epsilon-graph} for edge weights with the decay factor $\gamma=\epsilon/2$. 

For eigenvector calculation in \eqref{eq:modified_eig}, we use the approximation order $\omega=30$ and the number of iterations $k=50$.
We also perform the naive eigenvector calculation.
For an alternative approach, we also perform LOBPCG-based eigenvector calculation \cite{knyazevOptimalPreconditionedEigensolver2000}.
The native calculation and LOBPCG use a graph operator derived from \eqref{eq:modified_eig} to compute eigenvectors.

We compare the denoising performance of the proposed method with three existing methods: 1) spatial filter proposed in \cite{fengEventDensityBased2020a} (abbreviated as Centroid Filter), 2) spatial filter proposed in \cite{zhangNeuromorphicImagingDensitybased2023} (abbreviated as Density Filter), and 3) GNN-driven Transformer (abbreviated as GNN-Transformer) \cite{alkendiNeuromorphicCameraDenoising2022}.

We use four metrics for performance comparison: computation time (CT)\footnote{The experiments using KoGTL have been conducted in Python while the other experiments have been in MATLAB.}, true positive rate (TPR), true negative rate (TNR), and accuracy (Acc).\\
\noindent\textbf{Computational time (CT):}
    We measure the computational time for event denoising. In the proposed method, it also counts on the computational time for the preparation phase in \eqref{eq:modified_eig}. 
    In the GNN-Transformer, we do not include that for the training phase.
    All the experiments are conducted on a MacBook Air with an Apple M2 chip, macOS Ventura 13.0, and 24 GB of RAM.
    \medskip
    
\noindent\textbf{True positive rate (TPR):}
    This metric represents the ratio of the number of the detected real events (TP: true positive) to that of entire real events.
    Denoting the number of the undetected real events (FP: false positive), TPR is measured by $\text{TPR} = \frac{\text{TP}}{\text{TP} + \text{FP}}$.
    \medskip
    
\noindent\textbf{True negative rate (TNR):}
    This metric represents the ratio of the number of the detected noise events (TN: true negative) to that of entire noise events.
    Denoting the number of the undetected noise events (FN: false negative), TNR is measured by $\text{TNR} = \frac{\text{TN}}{\text{TN} + \text{FN}}$.
    \medskip
    
\noindent\textbf{Accuracy (Acc):}
    This metric represents the ratio of the number of correctly detected events to that of entire events.
    Accuracy is given by $\text{Accuracy} = \frac{\text{TP} + \text{TN}}{\text{TP} + \text{TN} + \text{FP} + \text{FN}}$.



\begin{figure*}[t]
\centering
\subfigure[Ground truth]{\includegraphics[width = 0.48\columnwidth]{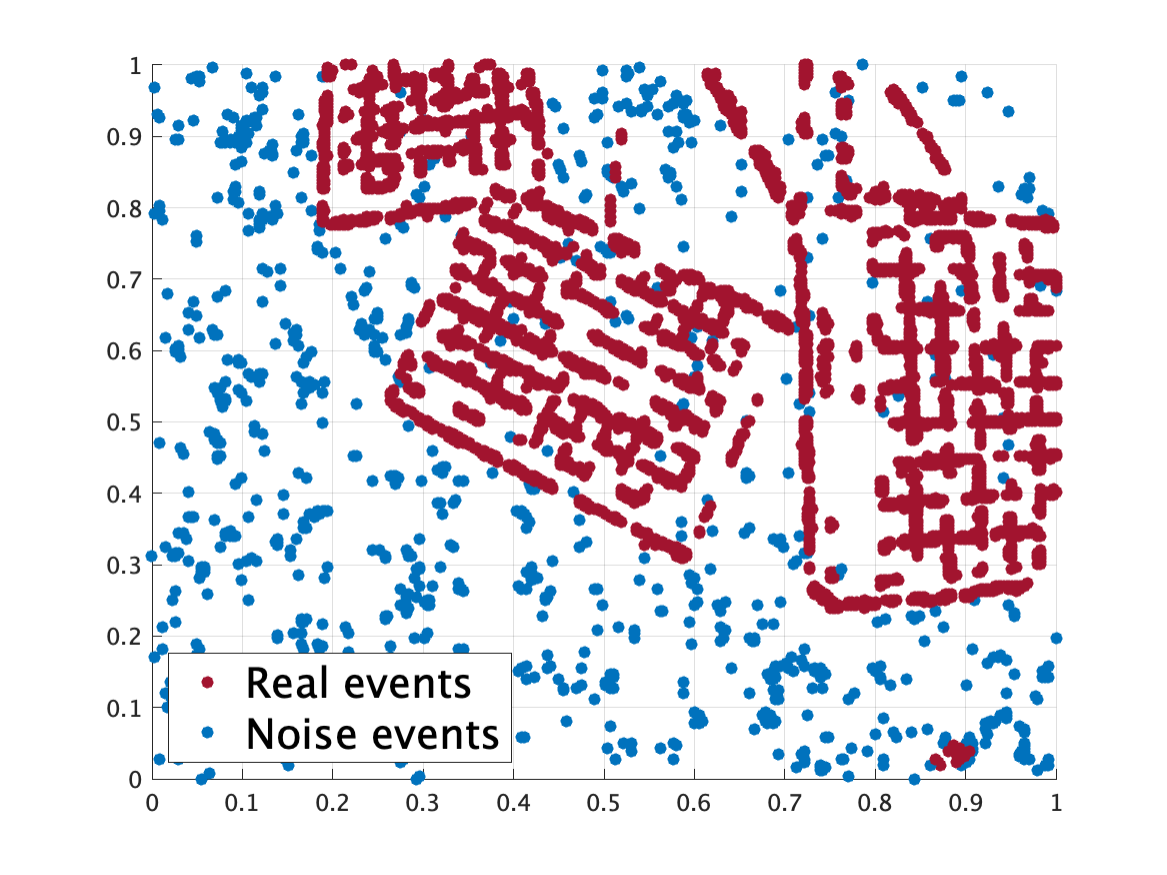}\label{fig:groundtruth}} 
\subfigure[Denoised: proposed w/ EVD]{\includegraphics[width = 0.48\columnwidth]{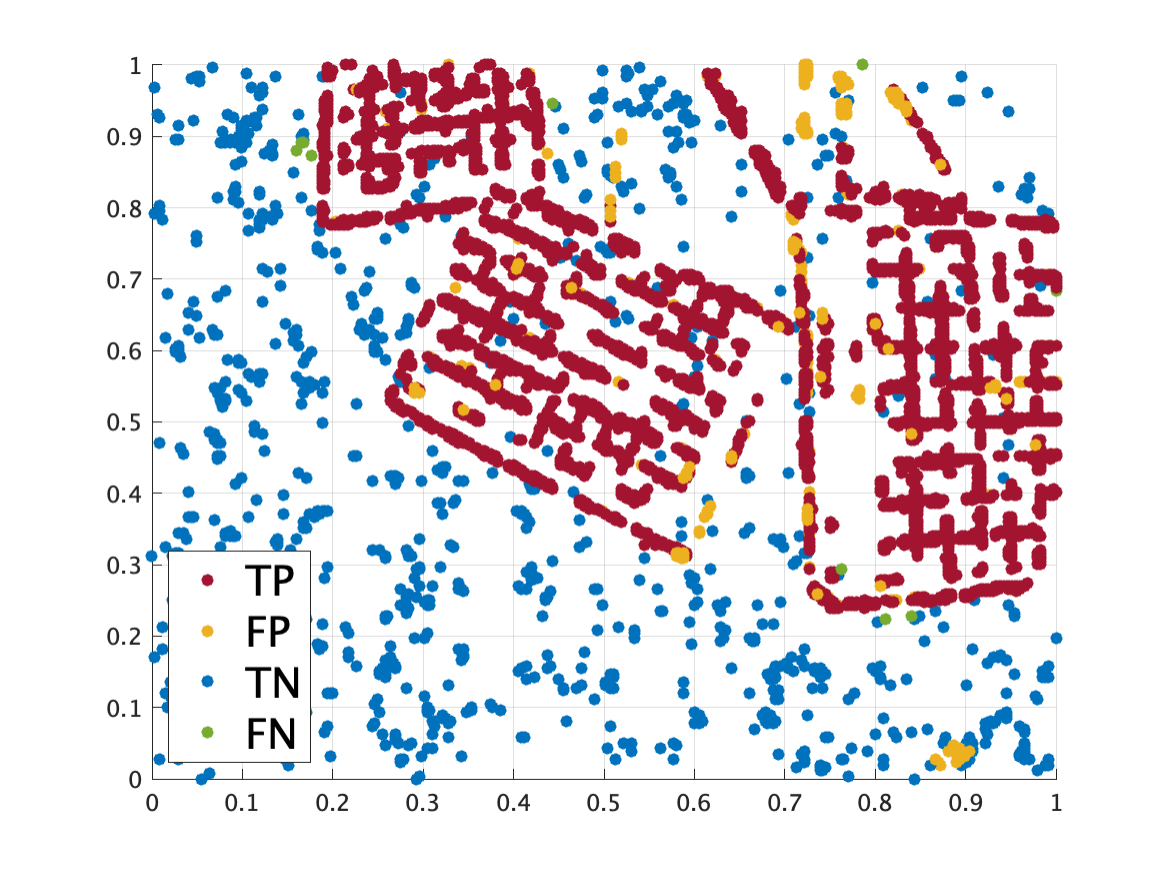}\label{fig:epsilon}}
\subfigure[Denoised: proposed w/ Power Method]{\includegraphics[width = 0.48\columnwidth]{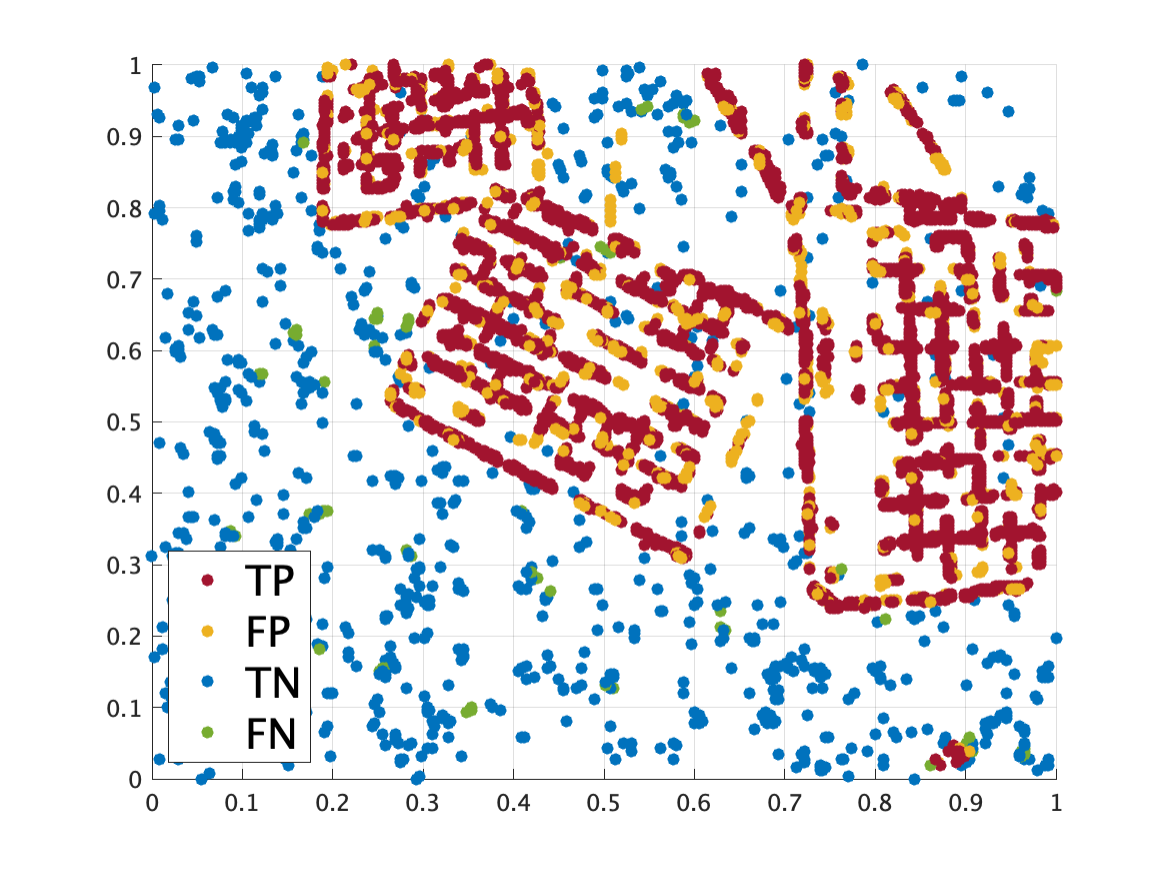}\label{fig:power}}  
\subfigure[Denoised: proposed w/ LOBPCG]{\includegraphics[width = 0.48\columnwidth]{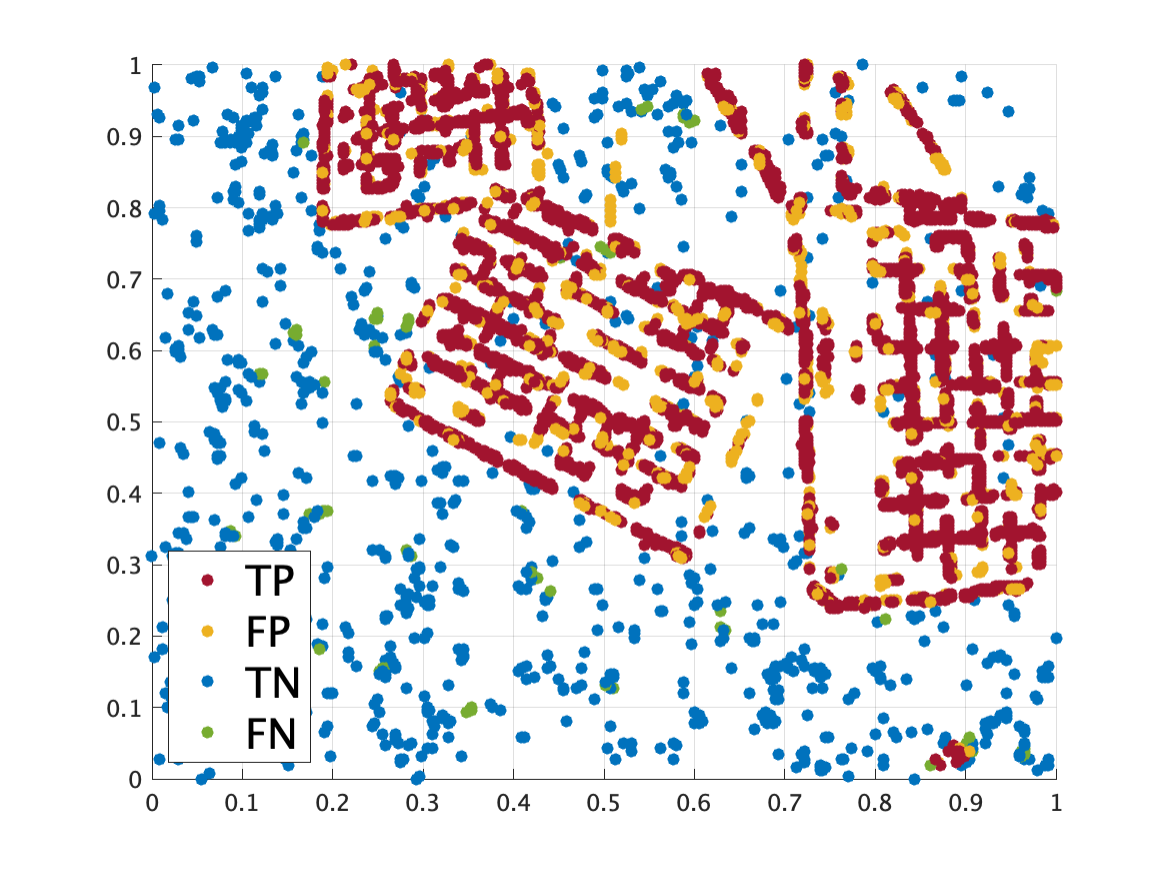}\label{fig:lobpcg}}\\
\subfigure[Denoised: Centroid Filter]{\includegraphics[width = 0.48\columnwidth]{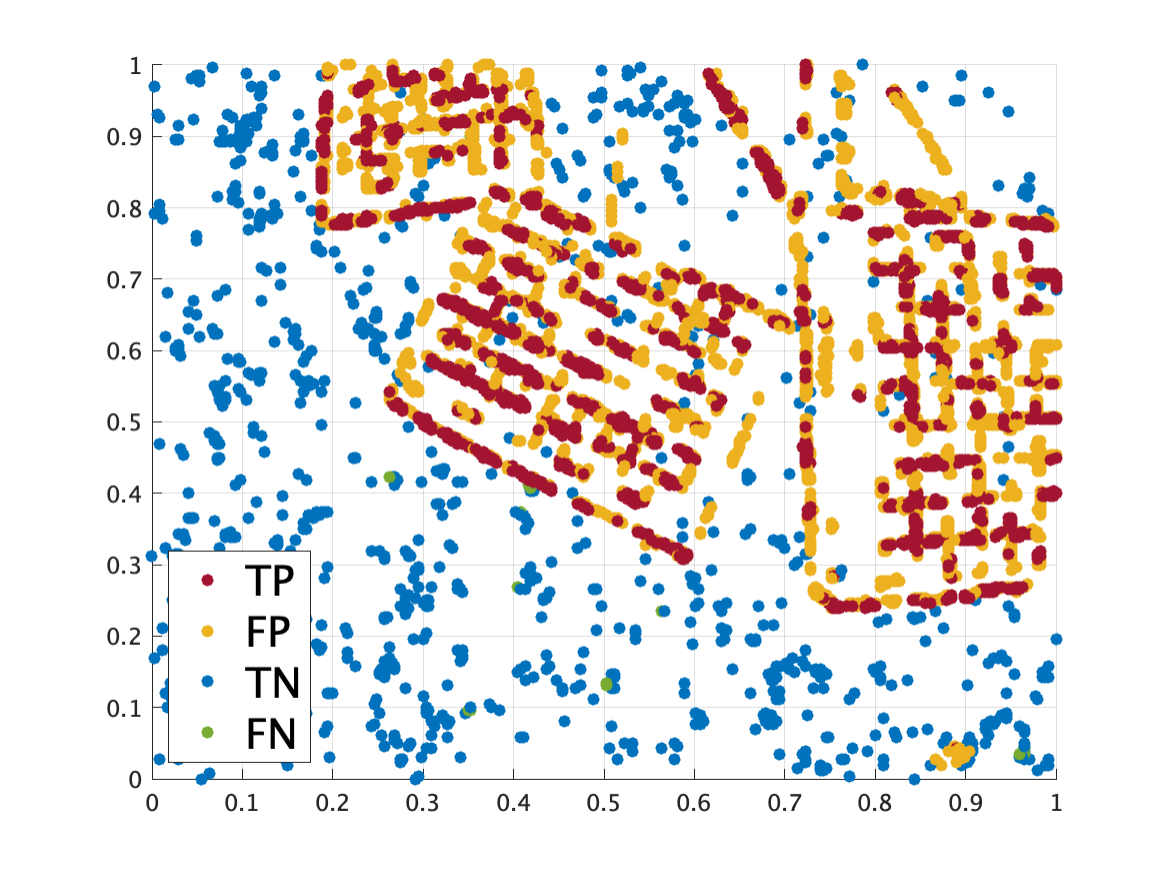}\label{fig:Feng}}
\subfigure[Denoised: Density Filter]{\includegraphics[width = 0.48\columnwidth]{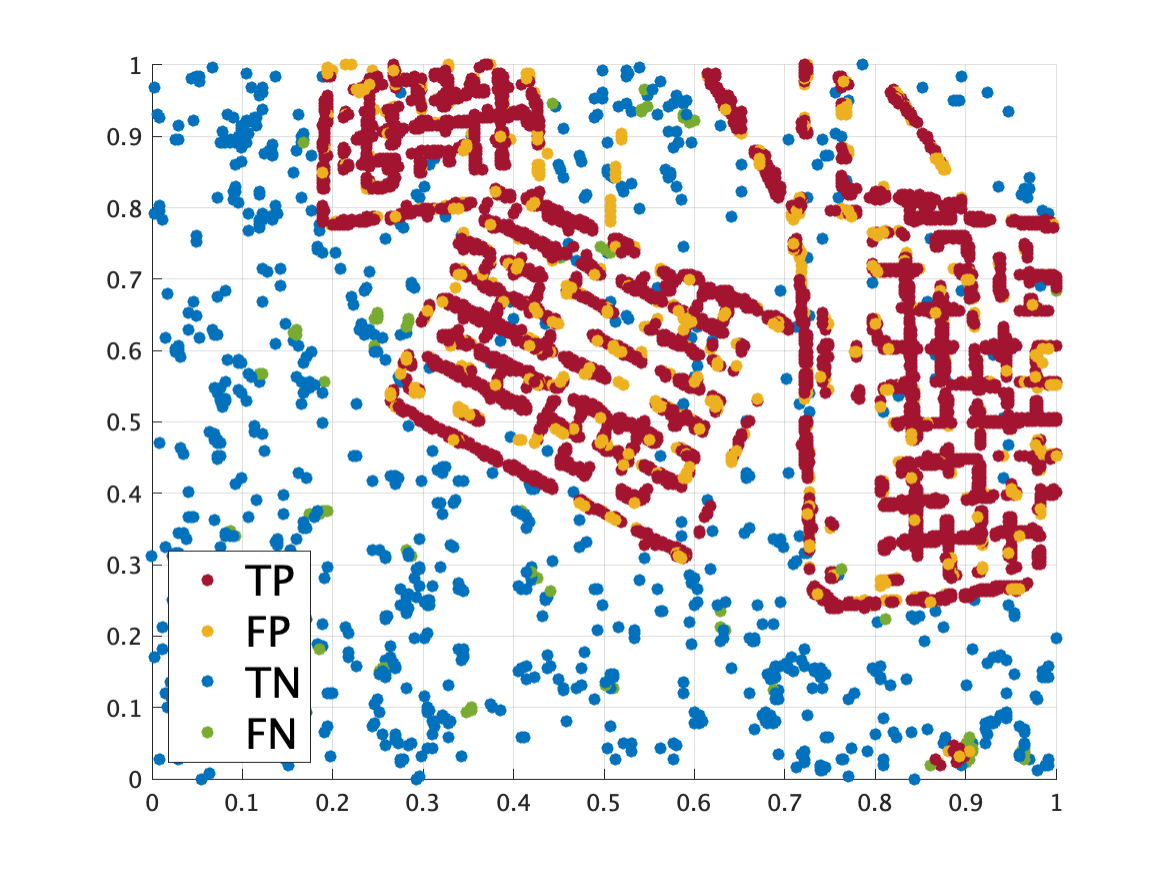}\label{fig:Density}}
\subfigure[Denoised: GNN-Transformer]{\includegraphics[width = 0.48\columnwidth]{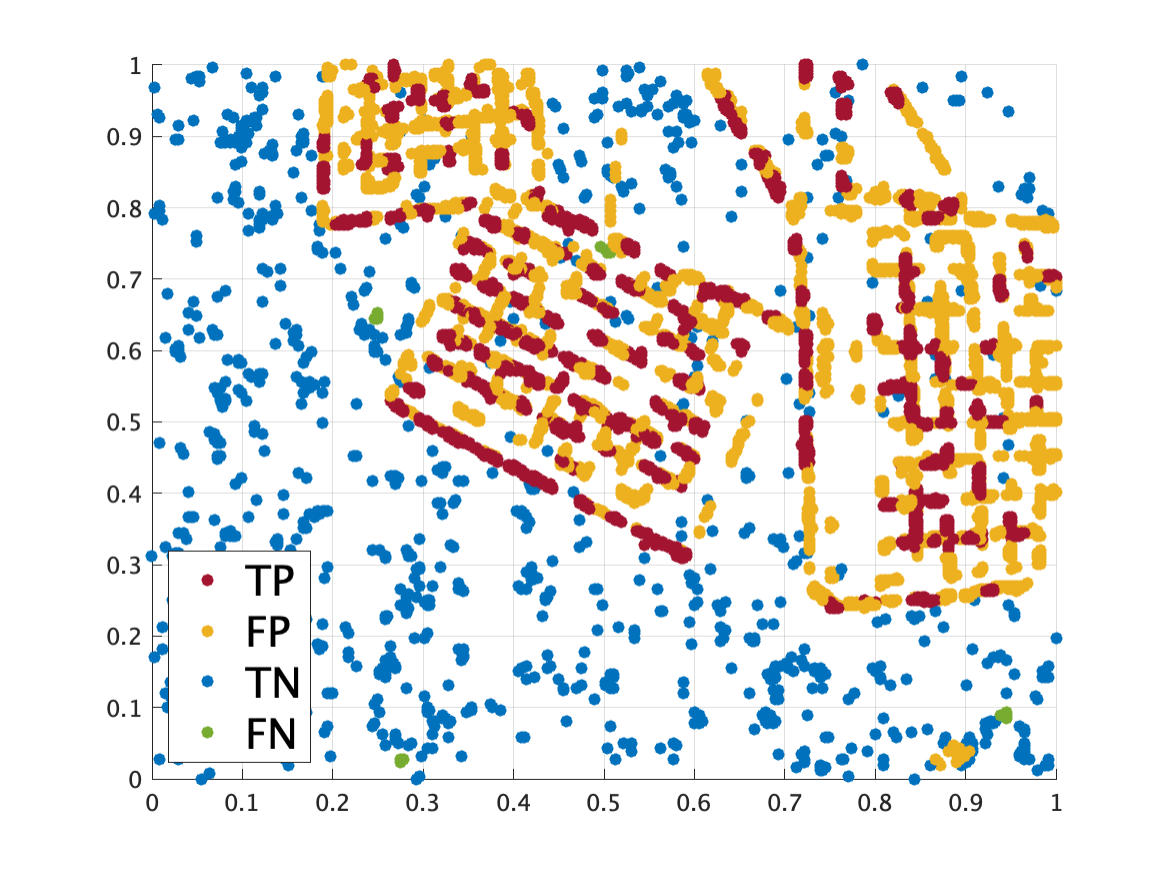}\label{fig:KoGTL}} 
\centering
\caption[Scatter plots of denoised events in the ED-KoGTL dataset.]{Scatter plots of denoised events in the ED-KoGTL dataset. Events are colored in the same manner as in Fig.~\ref{fig:synthetic_image}.}
\label{fig:real_image}
\end{figure*}


\subsubsection{Results: Qualitative Evaluation}\label{sub:result_single}
 Examples of the denoised events are visualized in Figs.~\ref{fig:synthetic_image} and \ref{fig:real_image}. We obtain these visualizations by mapping denoised events in 2-D spatial coordinates.
 
\noindent\textbf{N-Caltech101 dataset:}
    As observed in Fig.~\ref{fig:synthetic_image}, the proposed method with EVD successfully removes noise events. It indicates that the $\epsilon$NG can properly reflect the density prior.
    The proposed method implemented with the power method and LOBPCG exhibits comparable results with its naive implementation.
    While the Density Filter seems to show the highest performance among existing denoising methods, real events still remain undetected. 
    The Centroid Filter often removes real events as noise events accidentally.
    The noise events still remain for event denoising using GNN-Transformer.
    \medskip
    
\noindent\textbf{ED-KoGTL dataset:}
    As observed in Fig.~\ref{fig:real_image}, the proposed method, especially one with EVD, successfully removes noise events.
    The Centroid Filter and GNN-Transformer classify many real events as noise.

\subsubsection{Results: Quantitative Evaluation}
The experimental results on N-Caltech101 and ED-KoGTL are summarized in Table~\ref{tab:tab1}.

\noindent\textbf{N-Caltech101 dataset:}
    In Table~\ref{tab:tab1}\ref{tab:sub1}, the proposed method implemented with the customized power method and LOBPCG is comparable with its naive implementation in TPR, TNR, and Acc, while CT is significantly faster.
    This indicates that the combination of the transformation \eqref{eq:modified_eig} and fast eigensolvers works properly.
    
    The proposed method consistently achieves high Acc and TNR across varying noise ratios. In contrast, alternative methods have low TNR, as the level of hot pixel noise increases. This indicates that the proposed method is robust against ratios of noise events.
    The Centroid Filter shows high TNR but low TPR and Acc for all noise ratios. This suggests that it can mistakenly remove real events as noise events.
    Although GNN-Transformer shows high TPR and Accuracy for all noise ratios, it suffers from low TNR. This implies that it misclassifies many noise events as real ones.
    \medskip

\noindent\textbf{ED-KoGTL dataset:}
    As observed in Table~\ref{tab:tab1}\ref{tab:sub2}, the proposed method with EVD outperforms existing methods for all metrics except for CT. This indicates that the $\epsilon$NG suitably captures the irregular structure of the event stream as in the previous example. 
    Furthermore, we observe that the proposed method implemented with the power method and LOBPCG is comparable with its naive implementation in Acc while CT is significantly faster.

\begin{table}[tp]
    \caption{Experimental Results for Single Clustered Scenarios. (a) N-Caltech101 dataset (average for seven objects). 
    (b) ED-KoGTL dataset. Bold numbers indicate the best results while underlined numbers represent the second-best results.}
    \label{tab:tab1}
    \centering
    
    \subfigure[N-Caltech101 dataset (average for seven objects)]{%
    \label{tab:sub1}%
    \begin{tabular}{c|c|c|c|c}\hline
        & CT [sec] & TPR & TNR & Acc \\ \hline\hline
        \multicolumn{5}{c}{BA noise: 10\%, Hot pixel noise: 2\%} \\ \hline
        Proposed w/ EVD & $45.12$ & $0.915$ & \underline{0.992} & 0.924\\ 
        Proposed w/ Power Method & $7.15$ & \underline{0.942} & $0.894$ & \textbf{0.936}\\ 
        Proposed w/ LOBPCG & $5.2$ & $0.931$ & $0.898$ & $\underline{0.927}$\\ \hline
        Centroid Filter & $110$ & $0.734$ & \textbf{0.999} & $0.763$\\ 
        Density Filter & \underline{0.93} & $0.921$ & $0.790$ & $0.906$\\  
        GNN-Transformer & \textbf{0.62} & \textbf{0.959} & $0.456$ & $0.902$\\ \hline\hline
        
        \multicolumn{5}{c}{BA noise: 8\%, Hot pixel noise: 4\%} \\ \hline
        Proposed w/ EVD & $44.38$ & \underline{0.946} & \underline{0.976} & \textbf{0.947}\\ 
        Proposed w/ Power Method & $7.52$ & $0.943$ & $0.848$ & \underline{0.932}\\ 
        Proposed w/ LOBPCG & $5.1$ & $0.935$ & $0.876$ & $0.928$\\ \hline
        Centroid Filter & $107$ & $0.706$ & \textbf{0.999} & $0.738$\\ 
        Density Filter & \underline{0.84} & $0.912$ & $0.754$ & $0.894$\\  
        GNN-Transformer & \textbf{0.39} & \textbf{0.963} & $0.395$ & $0.901$\\ \hline\hline
        
        \multicolumn{5}{c}{BA noise: 6\%, Hot pixel noise: 6\%} \\ \hline
        Proposed w/ EVD & $43.24$ & $0.945$ & \underline{0.990} & \textbf{0.950}\\ 
        Proposed w/ Power Method & $5.50$ & \underline{0.950} & $0.892$ & \underline{0.944}\\ 
        Proposed w/ LOBPCG & $5.3$ & $0.940$ & $0.865$ & $0.931$\\ \hline
        Centroid Filter & $106$ & $0.729$ & \textbf{0.999} & $0.759$\\ 
        Density Filter & \underline{0.86} & $0.894$ & $0.702$ & $0.873$\\  
        GNN-Transformer & \textbf{0.60} & \textbf{0.963} & $0.318$ & $0.890$\\ \hline
    \end{tabular}
    }

    \subfigure[ED-KoGTL dataset]{%
    \label{tab:sub2}%
    \begin{tabular}{c|c|c|c|c}\hline
        & CT [sec] & TPR & TNR & Acc \\ \hline\hline
        Proposed w/ EVD & $171$ & \textbf{0.943} & \textbf{0.995} & \textbf{0.956}\\ 
        Proposed w/ Power Method & $3.60$ & \underline{0.919} & $0.862$ & \underline{0.905}\\ 
        Proposed w/ LOBPCG & $2.8$ & $0.911$ & $0.872$ & $0.902$\\ \hline
        Centroid Filter & $424$ & $0.694$ & $0.911$ & $0.747$\\ 
        Density Filter & \underline{1.62} & $0.877$ & $0.792$ & $0.856$\\ 
        GNN-Transformer & \textbf{0.35} & $0.405$ & \underline{0.990} & $0.548$\\ \hline
    \end{tabular}
    }
\end{table}

\subsection{Denoising with Multiple Clusters of Real Events}

\begin{figure*}[t]
\centering
\subfigure[Ground truth]{\includegraphics[width = 0.48\columnwidth]{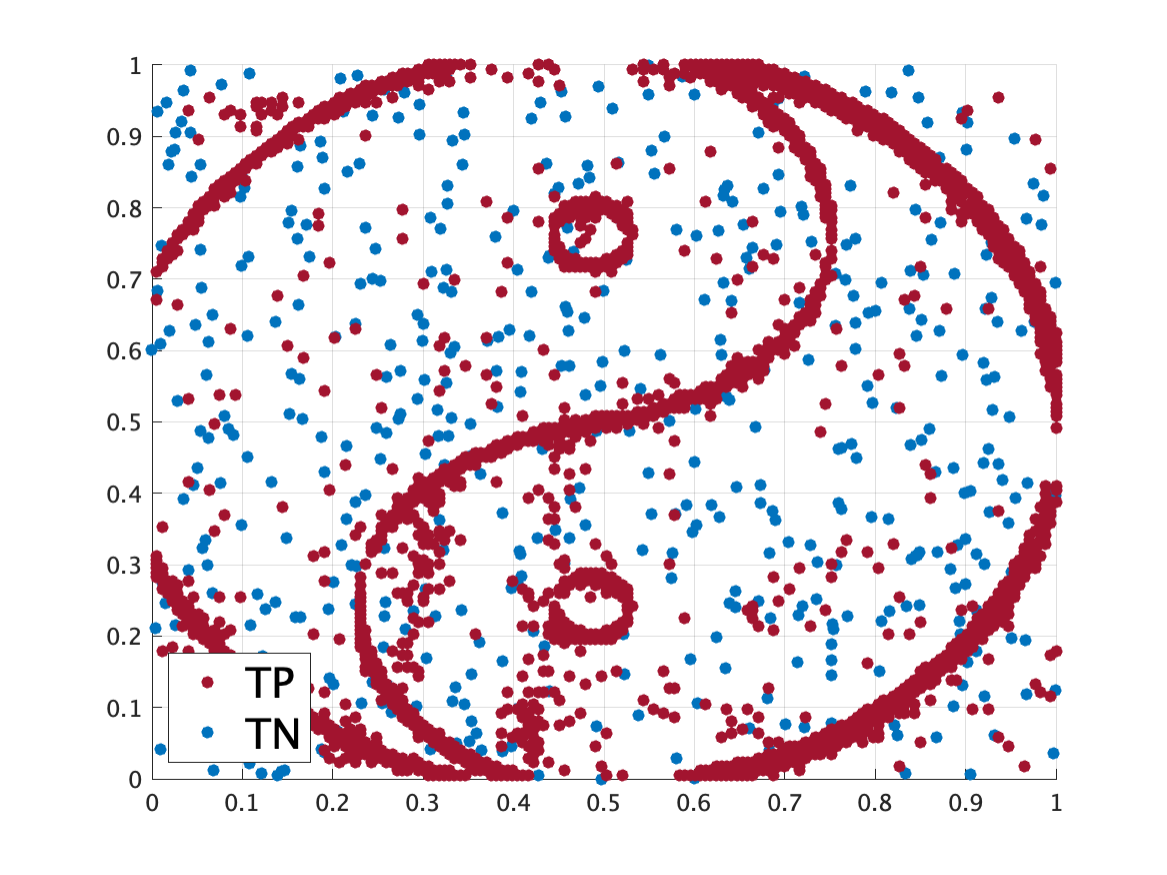} \label{fig:groundtruth_mul}} 
\subfigure[Denoised: proposed w/ EVD]{\includegraphics[width = 0.48\columnwidth]{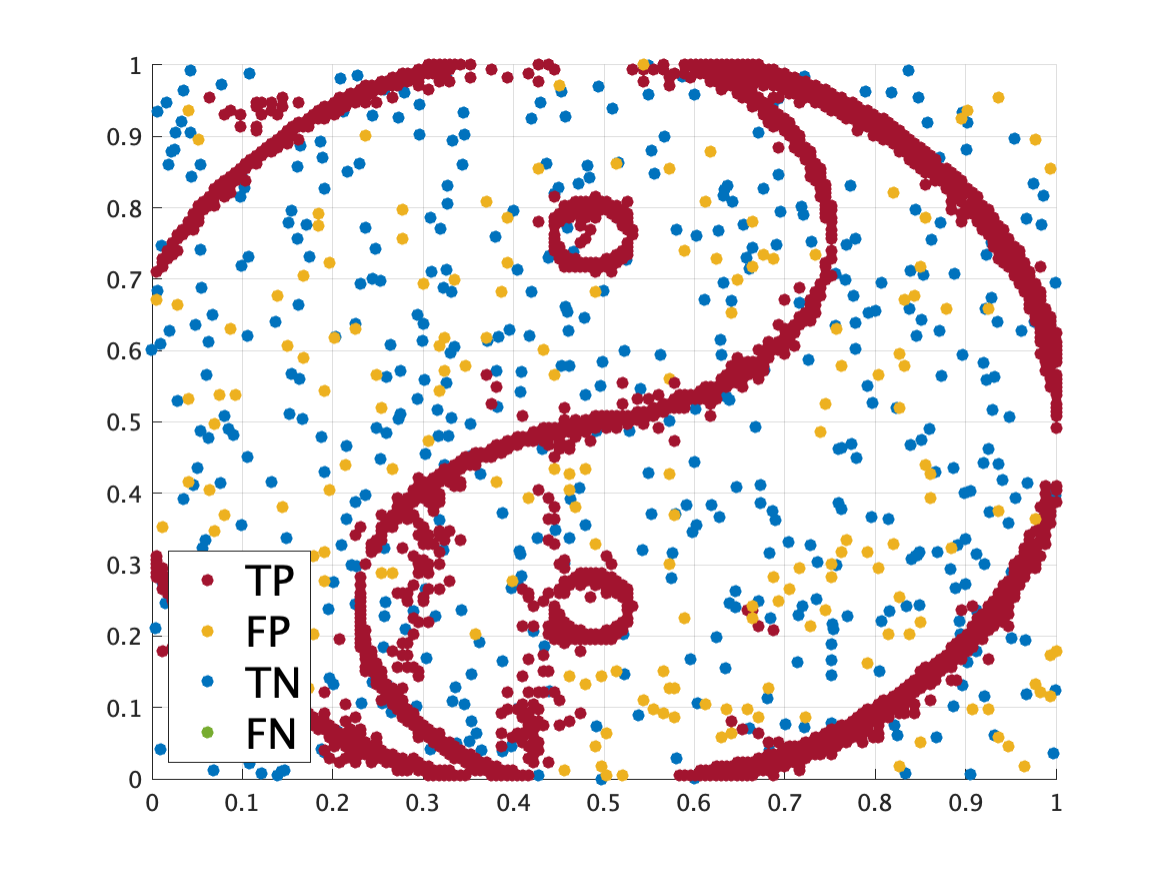} \label{fig:mul_epsilon}} 
\subfigure[Denoised: proposed w/ Power Method]{\includegraphics[width = 0.48\columnwidth]{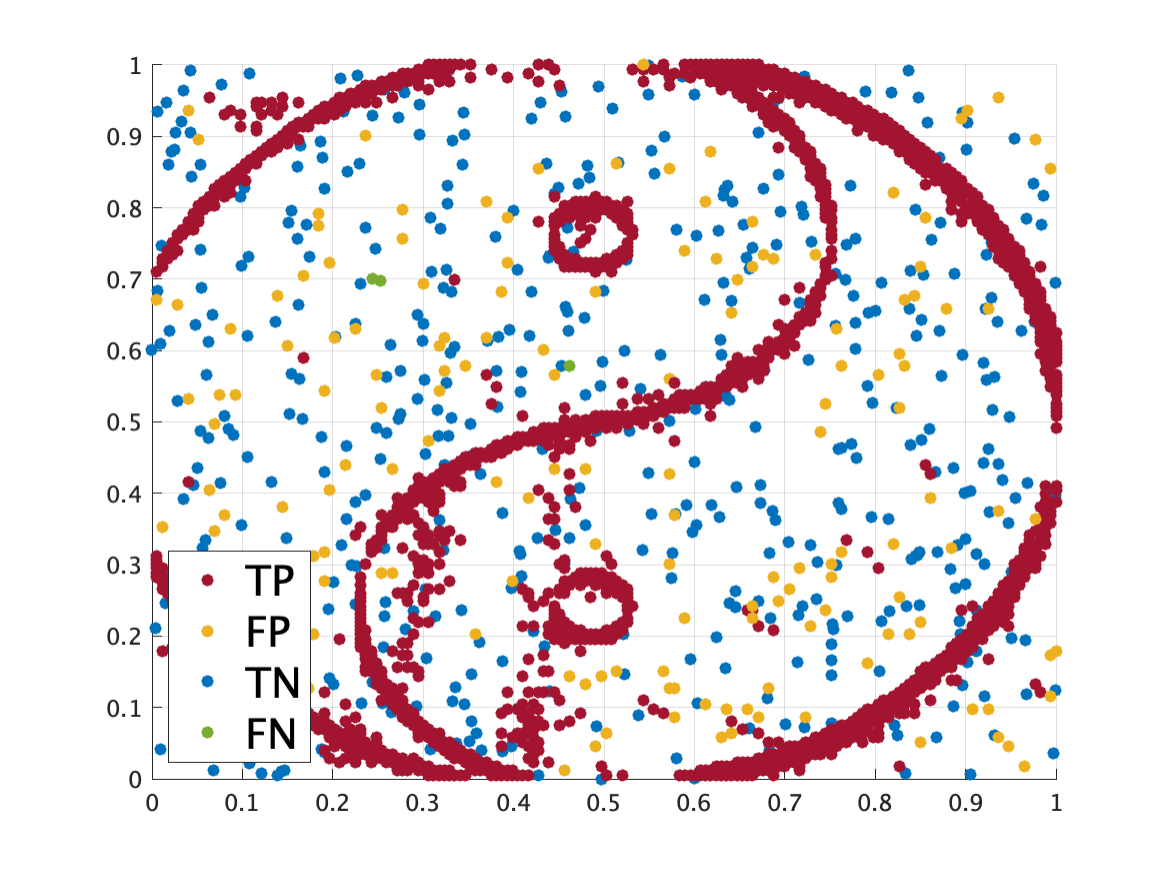} \label{fig:mul_power}} 
\subfigure[Denoised: proposed w/ LOBPCG]{\includegraphics[width = 0.48\columnwidth]{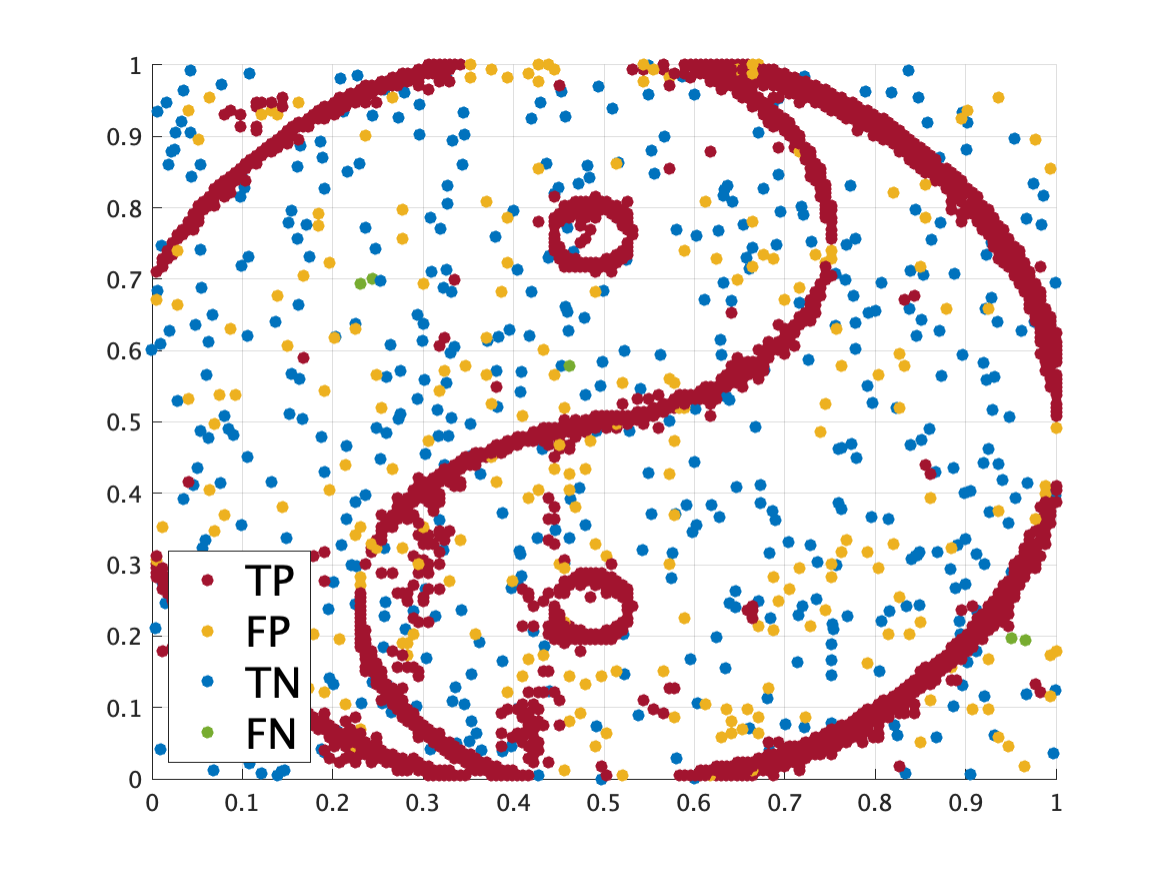} \label{fig:mul_lopcg}}
\subfigure[Denoised: Centroid Filter]{\includegraphics[width = 0.48\columnwidth]{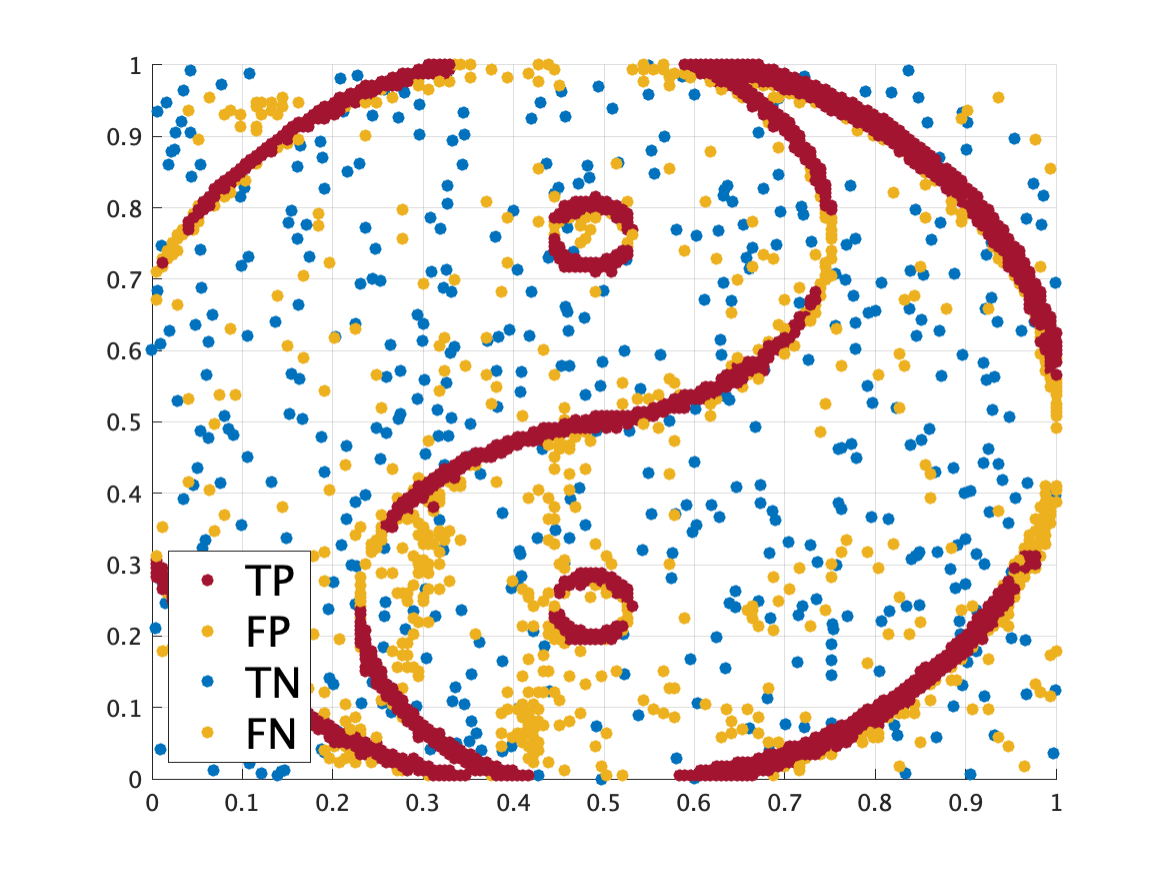} \label{fig:mul_feng}}
\subfigure[Denoised: Density Filter]{\includegraphics[width = 0.48\columnwidth]{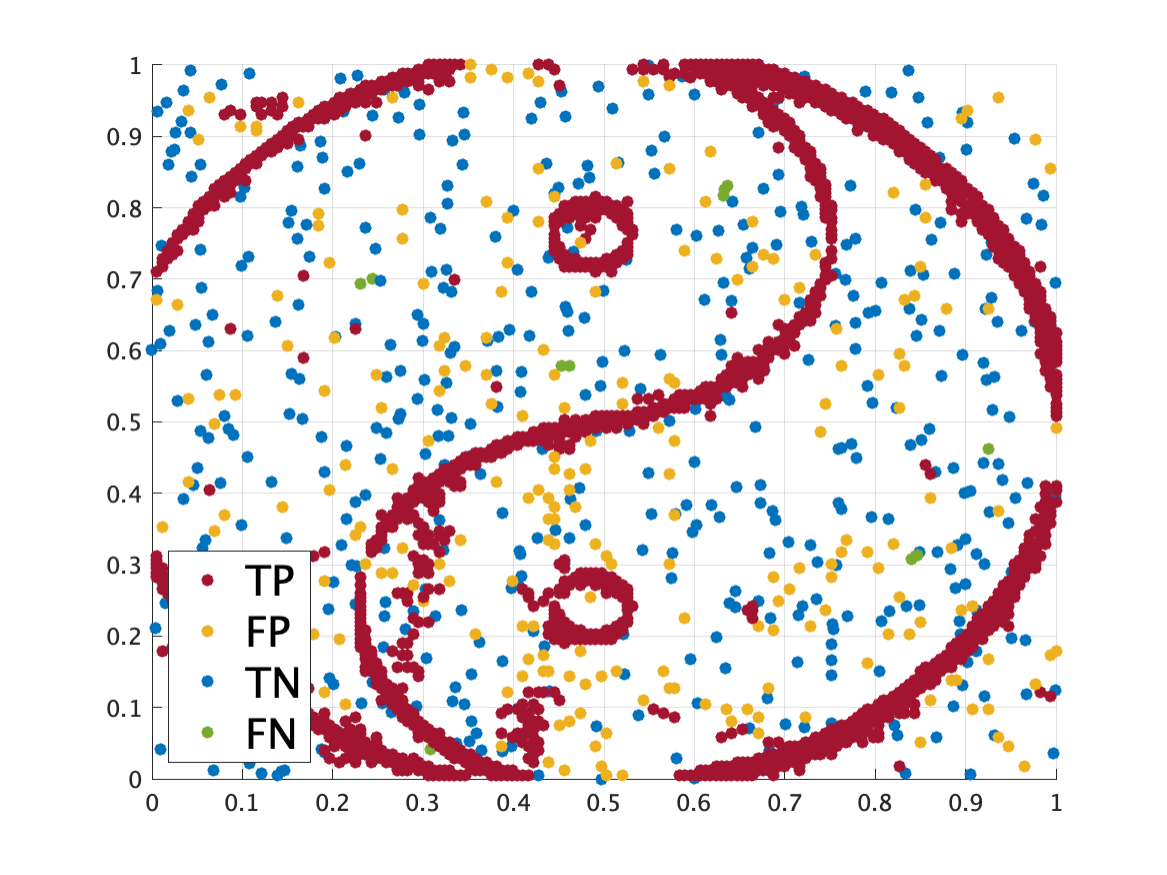} \label{fig:mul_density}}
\subfigure[Denoised: GNN-Transformer]{\includegraphics[width = 0.48\columnwidth]{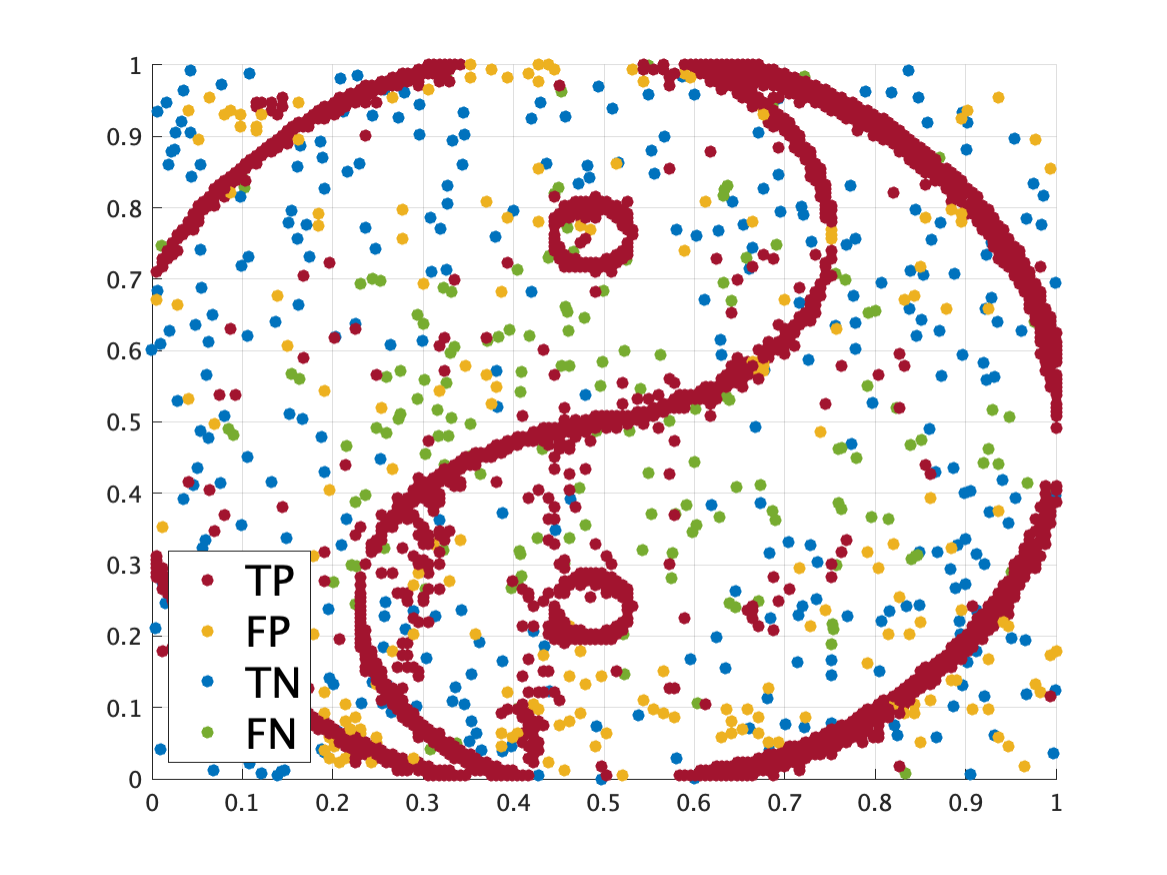} \label{fig:mul_kogtl}}
\centering
\caption{Scatter plots of denoised events. Events are colored in the same manner as in Fig.~\ref{fig:synthetic_image}.}
\label{fig:denoised_image}
\end{figure*}

Here, we conduct denoising experiments with real-world events forming multiple clusters.

\subsubsection{Setup}

We use the N-Caltech101 dataset introduced in Section~\ref{sub:setups}. We conduct denoising experiments on an event stream that forms \textit{multiple clusters}.
All other experimental settings remain the same as in Section~\ref{sub:setups}.


\subsubsection{Results: Qualitative Evaluation}

We visualize the 2-D image of denoised events in Fig.~\ref{fig:denoised_image}.
As shown in Fig.~\ref{fig:denoised_image}\ref{fig:mul_epsilon}, the proposed method with EVD effectively detects noise events. 
This indicates that the $\epsilon$NG can also capture the dense nature of multiple clusters formed by real events.
Similar to Section~\ref{sub:result_single}, the Centroid Filter tends to misclassify real events as noise while the GNN-Transformer often fails to remove noise events.

\subsubsection{Results: Quantitative Evaluation}
Experimental results are summarized in Table~\ref{tab:tab_multi}. As shown in the Table~\ref{tab:tab_multi}, the proposed method implemented with the power method and LOBPCG is comparable with its naive implementation in terms of TPR, TNR, and Acc, while it significantly reduces CT.
This indicates that the combination of the transformation \eqref{eq:modified_eig} and fast eigensolvers works properly.
Furthermore, the proposed method displays the same tendencies as observed in Section~\ref{sub:result_single} when compared to alternative methods, often outperforming alternative methods in all metrics.


\begin{table}[tp]
    \caption{Experimental Results for Multiple Clustered Scenarios. Bold numbers and underlined numbers represent the best results and the second-best results, respectively.}
    \centering
    \begin{tabular}{c|c|c|c|c}\hline
        & CT [sec] & TPR & TNR & Acc \\ \hline\hline
      \multicolumn{5}{c}{BA noise: 10\%, Hot pixel noise: 2\%} \\ \hline
      Proposed w/ EVD & $44.20$ & $0.954$ & $\underline{0.963}$ & $\mathbf{0.955}$\\ 
      Proposed w/ Power Method & $1.88$ & $\mathbf{0.965}$ & $0.872$ & $\underline{0.954}$\\ 
      Proposed w/ LOBPCG & $\underline{1.39}$ & $0.939$ & $0.878$ & $0.932$\\ \hline
      Centroid Filter & $107$ & $0.809$ & $\mathbf{1}$ & $0.831$\\ 
      Density Filter & $1.51$ & $0.951$ & $0.764$ & $0.930$\\  
      GNN-Transformer & $\mathbf{0.77}$ & $\underline{0.955}$ & $0.520$ & $0.905$\\ \hline\hline
      \multicolumn{5}{c}{BA noise: 8\%, Hot pixel noise: 4\%} \\ \hline
      Proposed w/ EVD & $42.81$ & $\underline{0.936}$ & $\underline{0.943}$ & $\mathbf{0.937}$\\ 
      Proposed w/ Power Method & $2.51$ & $\underline{0.936}$ & $0.869$ & $\underline{0.928}$\\ 
      Proposed w/ LOBPCG & $\underline{0.82}$ & $0.908$ & $0.880$ & $0.905$\\ \hline
      Centroid Filter & $102$ & $0.806$ & $\mathbf{1}$ & $0.828$\\ 
      Density Filter & $0.85$ & $0.920$ & $0.701$ & $0.896$\\  
      GNN-Transformer & $\mathbf{0.64}$ & $\mathbf{0.955}$ & $0.487$ & $0.905$\\ \hline\hline
      \multicolumn{5}{c}{BA noise: 6\%, Hot pixel noise: 6\%} \\ \hline
      Proposed w/ EVD & $42.77$ & $0.936$ & $\underline{0.966}$ & $\mathbf{0.940}$\\ 
      Proposed w/ Power Method & $2.55$ & $\underline{0.938}$ & $0.869$ & $\underline{0.929}$\\ 
      Proposed w/ LOBPCG & $\underline{0.81}$ & $0.904$ & $0.865$ & $0.900$\\ \hline
      Centroid Filter & $106$ & $0.807$ & $\mathbf{1}$ & $0.828$\\ 
      Density Filter & $0.85$ & $0.907$ & $0.678$ & $0.883$\\  
      GNN-Transformer & $\mathbf{0.37}$ & $\mathbf{0.955}$ & $0.445$ & $0.900$\\ \hline\hline
    \end{tabular}
    \label{tab:tab_multi}
\end{table}

\section{Conclusion}\label{sec:conclusion}
This paper proposes a simple yet effective event denoising method based on graph spectral features.
We identify clusters of real events as large connected components in the graph.
In the proposed method, we first construct the $\epsilon$NG with an estimated parameter encapsulating the density prior.
Then, we obtain the eigenvectors corresponding to the small eigenvalues. In the calculation of eigenvectors, we reorder the non-zero eigenvalues of the graph Laplacian in a reverse order. This allows us to utilize the existing fast eigensolver algorithms and thereby reduce the computational complexity.
We detect large connected components by analyzing the non-zero elements of eigenvectors.
In experiments, we demonstrate that the proposed method effectively removes noise events from the raw event streams compared to alternative methods.

\appendix

\section{Comparison of Graph Construction Methods in Our Method}
\label{sec:graph_comparison}

Here, we study the effect of graph construction methods while this paper primarily focuses on the $\epsilon$NG.
We compare the $\epsilon$NG with the following graph construction methods.\\
\noindent\textbf{$\bm{k}$-Nearest Neighbor Graph ($\bm{k}$NNG):}
    The $k$NNG is one of the most common graph construction methods. Each node is connected to its $k$ nearest neighbors.
    
    Let us assume we have a set of signals 
    \(\mathbf{X} = [\mathbf{x}_1, \dots, \mathbf{x}_N]^\top \in \mathbb{R}^{N \times M}\)
    and a parameter $k$. The unweighted adjacency matrix of the $k$NNG, \(\mathbf{A}_{k\text{NNG}}\), is obtained by solving the following problem.
    \begin{equation}
        [\mathbf{A}_{k\text{NNG}}]_{i,:}
        = \argmin_{\mathbf{A}\in\{0,1\}^{N\times N}}
        \|[\mathbf{D}\circ\mathbf{A}]_{i,:}\|_1 \hspace{1em} \text{s.t. }
        \|[\mathbf{A}]_{i,:}\|_1 = k,
    \end{equation}
    where \(\mathbf{D}\) is the pairwise distance matrix for the rows of \(\mathbf{X}\), i.e., $[\mathbf{D}]_{i,j}=\|\mathbf{x}_i-\mathbf{x}_j\|$. This method assigns the fixed $k$ to each node.
    \medskip

\noindent\textbf{Varied-$\bm{k}$-Nearest Neighbor Graph (v$\bm{k}$NNG)\cite{zhangEfficientKNNClassification2018, tamaruOptimizingKNNGraphs2024}:}
    The v$k$NNG extends the $k$NNG by allowing each node to have a variable number of neighbors $k_i$. Its unweighted adjacency matrix \(\mathbf{A}_{\text{V}k\text{NNG}}\) is given by:
    \begin{equation}\label{eq:vknn}
    \begin{gathered}
        [\mathbf{A}_{\text{V}k\text{NNG}}]_{i}=\argmax_{\mathbf{A}\in\{0,1\}^{N\times N}} \|[\mathbf{A}]_{i,:}\|_1 \\
        \text{s.t. }\|[\mathbf{D} \circ \mathbf{A}]_{i,:}\|_1\leq \frac{1}{N}\|[\mathbf{D}]_{i,:}\|_1.
    \end{gathered}
    \end{equation}
    In \eqref{eq:vknn}, $k_i=\|[\mathbf{A}_{\text{V}k\text{NNG}}]_{i,:}\|_1$ is automatically determined based the distance to other nodes, and larger $k_i$ values are assigned to denser nodes.
    In this sense, the v$k$NNG can be viewed as an extension of $\epsilon$NG in which each node has its own radius \(\epsilon_i\).

\subsection{Setup}
We evaluate denoising accuracies on the ED-KoGTL dataset introduced in Section~\ref{sub:setups} as a function of the connectivity of graphs. For a fair comparison, we equalize the total number of edges for the three graph construction methods, and evaluate the denoising accuracy.

\subsection{Results}
Figure~\ref{fig:graph_copmarison} shows the comparison for the three graph construction methods. The $k$NNG is consistent but lower than the others when the graph become dense.
This is because real and noise events are connected with a large $k$.

The v$k$NNG achieves high accuracy when low to middle connectivities, but its accuracy is slightly worse than the $k$NNG for the denser case.
It is intuitive because the v$k$NNG and $k$NNG become very similar in this case.

In contrast, the $\epsilon$NG maintains higher accuracy from middle to high connectivities. This suggests that $epsilon$NG keeps noise events disconnected. 

In summary, v$k$NNG shows high accuracy with fewer edges, while the $\epsilon$NG generally needs more edges to achieve the same accuracy. However, finding optimal $k_i$ values for v$k$NNG could be challenging, whereas the best $\epsilon$ for the $\epsilon$NG can be determined easily (see Section~\ref{sub:estimate_e}).

\begin{figure}[t]
\centering
\includegraphics[width = 0.9\linewidth]{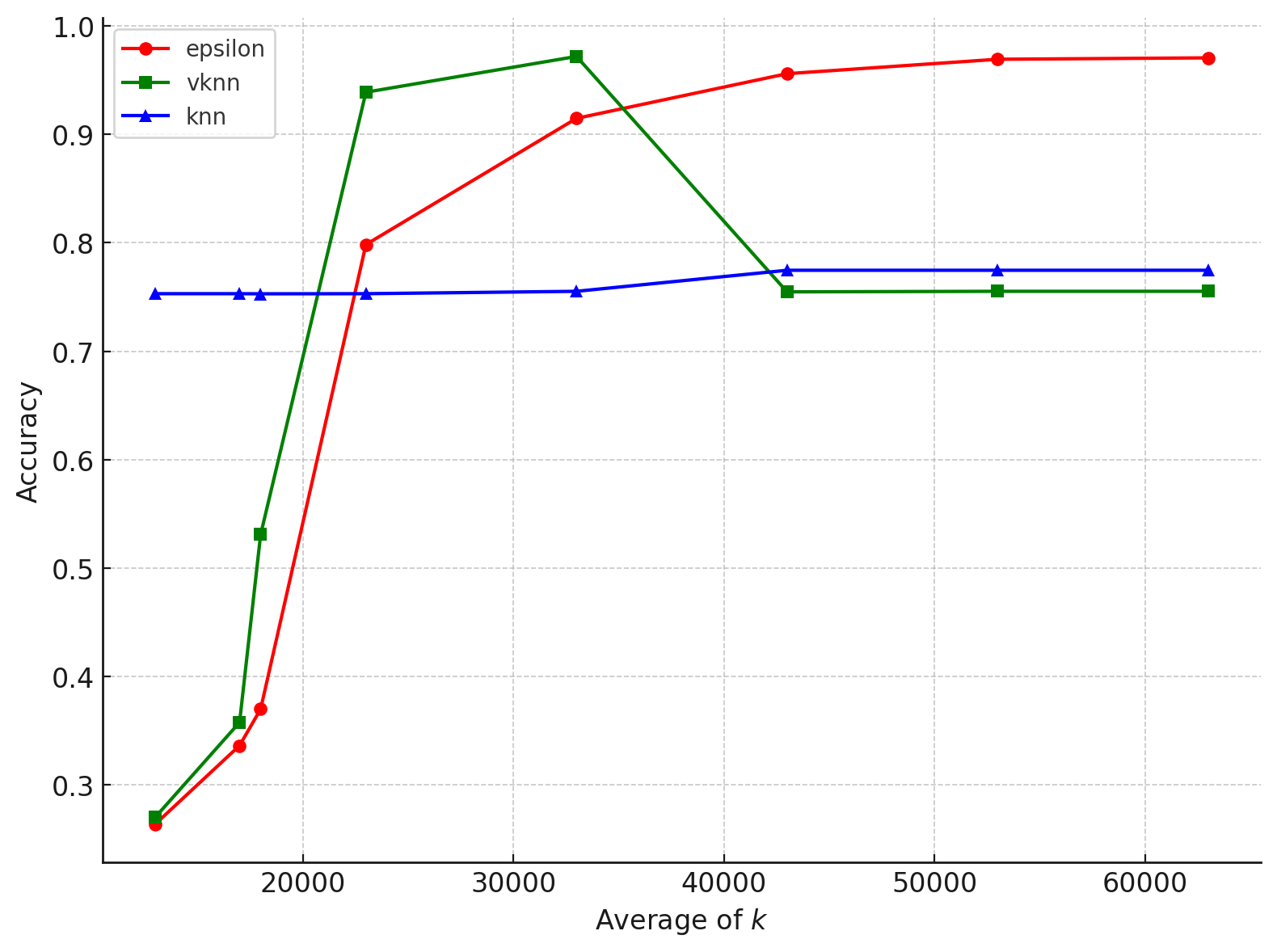} 
\caption{Accuracy of the proposed method for three graph constructions against varying connectivities.}
\label{fig:graph_copmarison}
\end{figure}

\section*{ACKNOWLEDGMENT}
The preferred spelling of the word ``acknowledgment'' in
American English is without an ``e'' after the ``g.'' Use the
singular heading even if you have many acknowledgments.
Avoid expressions such as ``One of us (S.B.A.) would like
to thank . . . .'' Instead, write ``F. A. Author thanks . . . .'' In
most cases, sponsor and financial support acknowledgments
are placed in the unnumbered footnote on the first page, not
here.

\bibliographystyle{IEEEtran}
\bibliography{bib}

\begin{IEEEbiographynophoto}
{SECOND B. AUTHOR,} photograph and biography not available at the time
of publication.
\end{IEEEbiographynophoto}

\begin{IEEEbiographynophoto}
{THIRD C. AUTHOR JR.}~(Member, IEEE), photograph and biography not available
at the time of publication.
\end{IEEEbiographynophoto}

\vfill\pagebreak

\end{document}